\documentclass[journal,twoside,web]{ieeeconf}
\usepackage{amsmath,amssymb,amsfonts}
\usepackage[dvipsnames]{xcolor}
\usepackage{mathtools}
\usepackage{floatrow}
\usepackage{tikz}
\usepackage{url}
\usepackage{hyperref}
\usepackage{verbatim}   
\usepackage{ntheorem}
\usepackage{algorithm,algorithmic}

\usepackage{siunitx}
\sisetup{output-exponent-marker=\ensuremath{\mathrm{e}}}
\usepackage{tabularx}
\usepackage{pgfplots}
\usepackage{pgfplotstable}
\usepackage{multirow}
\usepgfplotslibrary{fillbetween}
\usepackage{booktabs}
\pgfplotsset{compat = newest}
\usetikzlibrary{arrows,positioning,shapes,intersections,external,patterns,calc,fit,decorations,decorations.markings} 
\tikzexternalize[prefix=tikzextern/]
\usepackage{cite}

\newtheorem{theorem}{Theorem}

\newtheorem{remark}{Remark}

\newtheorem{lemma}{Lemma}

\newcommand{\R}{\mathbb{R}}

\usepackage{stmaryrd}
\newcommand{\llangle}{\langle\!\langle}
\newcommand{\rrangle}{\rangle\!\rangle}

\newcommand{\N}{\mathbb{N}}

\newcommand{\D}{\mathcal{D}}
\newcommand{\X}{\mathcal{X}}

\newcommand{\bm}[1]{{\boldsymbol{#1}}}
\newcommand{\Verts}[1]{{\left\Vert #1 \right\Vert}}
\DeclareMathOperator{\diag}{diag}
\DeclareMathOperator{\var}{var}
\DeclareMathOperator{\mean}{\mu}
\DeclareMathOperator{\Var}{\Sigma}
\DeclareMathOperator{\Mean}{\bm\mu}

\DeclareMathOperator{\Prob}{P}


\begin{document}
\title{Learning-based decentralized control with collision avoidance for multi-agent systems*}
\author{Omayra Yago Nieto$^{1}$, Alexandre Anahory Simoes$^{2}$, Juan I. Giribet$^{3}$ and Leonardo J. Colombo$^{4}$ 
\thanks{$^{1}$O.Yago Nieto is a PhD Student at Universidad Politécnica de Madrid, Spain. email:  omayra.yago.nieto@alumnos.upm.es}
\thanks{$^{2}$A.Anahory Simoes is with the School of Science and Technology, IE University, Spain. email:  alexandre.anahory@ie.edu}
\thanks{$^{3}$ J. Giribet is with LINAR Laboratory, Universidad de San Andr\'es, Buenos Aires, and CONICET, Argentina. email: jgiribet@conicet.gov.ar}
\thanks{$^{4}$ L. Colombo is with Centre for Automation and Robotics (CSIC-UPM), Centre for Automation and Robotics (CSIC-UPM), Ctra. M300 Campo Real, Km 0,200, Arganda
del Rey - 28500 Madrid, Spain. email:  leonardo.colombo@csic.es}
\thanks{* J. Giribet was partially supported by  PICT-2019-2371 and PICT-2019-0373 projects from Agencia Nacional de Investigaciones Cient\'ificas y Tecnol\'ogicas, and UBACyT-0421BA project from the Universidad de Buenos Aires (UBA), Argentina. The authors acknowledge financial support from the Spanish Ministry of Science and Innovation, under grants  PID2022-137909NB-C21 funded by MCIN/AEI\-/10.13039\-/501100011033}}

\maketitle

\begin{abstract}
 In this paper, we present a learning-based tracking controller based on Gaussian processes (GP) for collision avoidance of multi-agent systems where the agents evolve in the special Euclidean group in the space SE(3). In particular, we use GPs to estimate certain uncertainties that appear in the dynamics of the agents. The control algorithm is designed to learn and mitigate these uncertainties by using GPs as a learning-based model for the predictions. In particular, the presented approach guarantees that the tracking error remains bounded with high probability. We present some simulation results to show how the control algorithm is implemented.
\end{abstract}

\begin{keywords} Multi-agent systems, Learning-based control, Geometric control, Distributed learning, Lie groups, Gaussian processes.
\end{keywords}


\section{Introduction}
Extending the concept of a single autonomous mobile robot performing a task to a group of robots has been an area of active research in the last few decades. One of the key elements in the operation of groups of mobile robots is the control method used to coordinate the behavior of each robot~\cite{jorgebook}.  Some of the most widely used concepts are based on virtual potential fields \cite{Leonard}, \cite{RK}. Decentralized navigation approaches are more appealing to centralized ones, due to their reduced computational complexity and increased robustness concerning agent failures. In \cite{dimarogonas2003decentralized} and \cite{dimos_collisionDI05}, the methodology of the navigation function established in \cite{RK}
for centralized navigation of multiple robots is
extended to decentralized navigation and planar (2D) double integrator agents. In this work, we first extend this control scheme for dynamical agents moving in 3D space whose configuration is the Lie group $SE(3)$. Next, we employ such a result for learning-based control of multi-agent systems with partially unknown dynamics avoiding collision between them.

The design of safe controllers for multi-agent systems is a substantial aspect for an increasing range of application domains. However, parts of the robot's dynamics and external disturbances are often unknown or time-consuming to model. 
To overcome the issue of unknown dynamics, learning-based control laws have been proposed but they are limited to iterative learning~\cite{liu2012iterative}, leader-follower formations~\cite{yuan2017formation,yan2021,8870252} or lack of guarantees~\cite{Schilling2019}.  We present a safe decentralized controller for agents with dynamics evolving in the Lie group $SE(3)$ using Gaussian processes to learn the unknown dynamics, while agents avoid collisions between them. More concretely, we derive a set of data-driven control laws (one for each agent)
that drives a team of agents with partially unknown dynamics from any initial configuration to
a desired goal configuration, avoiding, at the same time,
collision between agents. The environment is assumed to be perfectly known and stationary, while each agent has knowledge of the distance to its neighbors and the dataset based on its own dynamics containing noisy data collected online relative to its position, attitude, and measured forces and torque. The learning-based controller theoretically guarantees that the trajectory of the system converges to the desired configuration with a bounded error, with probability at least $\epsilon$, where the bound is explicitly given. The online learning approach based on GPs allows to improve the model and, thus, the stability performance during runtime.

Decentralized machine learning control algorithms for multi-agent systems has been recently studied in~\cite{Tol2020, Gama2021} using graph neural networks, and also in \cite{yang2021distributed}. In particular, recent learning-based methods for formation control and flocking control of second-order planar agents can be found in \cite{beckers2021online}, \cite{beckers2022learning}, but none of them includes dynamical agents on $SE(3)$, with stability guarantees in the performance. To the best of our knowledge, there are no results for the design of a learning-based decentralized control law for dynamic agents on $SE(3)$ under partially unknown dynamics guaranteeing a bounded tracking error within a given probability.

The remainder of the paper is structured as follows. In Section \ref{sec2}, we present preliminary material on Riemannian geometry and Lie groups that we will use in the paper. In Section \ref{sec3}, we design the nominal control law for asymptotic stability of the agents on $SE(3)$ avoiding collision by using decentralized navigation functions. In Section \ref{sec4}, the online learning-based controller is given and in Section \ref{sec5}, we present some simulation results. 

\section{Dynamics of agents evolving on $SE(3)$}\label{sec2}

This section introduces conventional mathematical notions to describe simple mechanical systems on the Lie group $SE(3)$ which can be found in \cite{bullo2019geometric}, \cite{mars}, and \cite{BL} for instance. 

\subsection{Background on Differential Geometry}\label{gm}
Let $Q$ be a differentiable manifold with $\hbox{dim}(Q)=n$. Throughout the text, $q^{i}$ will denote a particular choice of local coordinates on this manifold, and $TQ$ denotes its tangent bundle, with $T_{q}Q$ denoting the tangent space at a specific point $q\in Q$. Usually $v_{q}$ denotes a vector at $T_{q}Q$ and, in addition, the coordinate chart $q^{i}$ induces a natural coordinate chart on $TQ$, denoted by $(q^{i},\dot{q}^{i})$ with $\hbox{dim}(TQ)=2n$. Let $T^{*}Q$ be its cotangent bundle, locally described by the positions and the momentum for the system, i.e., $(q,p)\in T^{*}Q$ with $\hbox{dim}(T^{*}Q)=2n$. The cotangent bundle at a point $h\in Q$ is denoted as $T_{h}^{*}Q$.

A Riemannian manifold is denoted by the 2-tuple $(Q, \left< \cdot, \cdot\right>)$, where $Q$ is a smooth connected manifold and $\left< \cdot, \cdot\right>$ is the metric on $Q$, a positive-definite symmetric covariant 2-tensor field. That is, to each point $q\in Q$, we assign a positive-definite inner product $\left<\cdot, \cdot\right>_q:T_qQ\times T_qQ\to\mathbb{R}$, where $\left<\cdot, \cdot\right>_q$ varies smoothly with respect to $q$. The length of a tangent vector is determined by its norm, defined by $\Vert v_q\Vert=\left<v_q,v_q\right>^{1/2}$ with $v_q\in T_qQ$. 

For any $p \in Q$, the Riemannian metric induces an invertible map $\cdot^\flat: T_p Q \to T_p^\ast Q$, called the \textit{flat map}, defined by $X^\flat(Y) = \left<X, Y\right>$ for all $X, Y \in T_p Q$. The inverse map $\cdot^\sharp: T_p^\ast Q \to T_p Q$, called the \textit{sharp map}, is similarly defined implicitly by the relation $\left<\alpha^\sharp, Y\right> = \alpha(Y)$ for all $\alpha \in T_p^\ast Q$.


 Let $G$ be a finite dimensional Lie group with identity element $e\in G$. A \textit{left-action} of $G$ on a manifold $Q$ is a smooth mapping $\Phi:G\times Q\to Q$ such that $\Phi(e,q)=q$ for all $q\in Q$, $\Phi(g,\Phi(h,q))=\Phi(gh,q)$ for all  $g,h\in G, q\in Q$, and for every $g\in G$, $\Phi_g:Q\to Q$ defined by $\Phi_{g}(q):=\Phi(g,q)$, is a diffeomorphism. Let $\mathfrak{g}$ be the Lie algebra associated to $G$, that is, $\mathfrak{g}=T_{e}G$.  The infinitesimal generator corresponding to $\xi \in \mathfrak{g}$ is the vector field on $TQ$, denoted by $\xi_Q$, which is defined as $\xi_Q(q) = \frac{\mathrm{d}}{\mathrm{d}t}|_{t=0} \Phi (\exp(t\xi), q)$, where $\exp$ denotes the exponential map. 
 
 The Lie bracket on $\mathfrak{g}$ is denoted by $[\cdot,\cdot]$. The \textit{adjoint map}, $ad_\xi : \mathfrak{g} \to \mathfrak{g}$ for $\xi \in \mathfrak{g}$ is defined as $ad_\xi \eta \coloneq [\xi, \eta]$ for $\eta \in \mathfrak{g}$. We denote by $L: G \times G \to G$ the \textit{left group action} in the first argument defined as $L(g,h) = L_{g} (h)= gh $ for all $g$, $h \in G$.
 
 Let $\mathbb{I} :\mathfrak{g} \to \mathfrak{g}^*$ be an isomorphism from the Lie algebra to its dual. The \textit{inverse} is denoted by $\mathbb{I}^\sharp: \mathfrak{g}^* \to \mathfrak{g}$. The isomorphism $\mathbb{I}$ induces the inner product $\llangle\cdot,\cdot\rrangle_{\mathbb{I}}:\mathfrak{g}\times\mathfrak{g}\to\mathbb{R}$ given by $\llangle\xi_{1},\xi_{2}\rrangle_{\mathbb{I}} = \langle \mathbb{I}(\xi_{1}), \xi_{2} \rangle_{\mathfrak{g}}$, for $\xi_1,\xi_2\in\mathfrak{g}$ and where $\langle\cdot,\cdot\rangle_{\mathfrak{g}}:\mathfrak{g}^{*}\times\mathfrak{g}\to\mathbb{R}$ denotes the natural pairing between elements of $\mathfrak{g}^{*}$ and $\mathfrak{g}$.  In addition, $\mathbb{I}$ induces a
left invariant metric on $G$ (see \cite{bullo2019geometric}), which we denote by $\mathbb{G}_{\mathbb{I}}$ and is defined by  $\mathbb{G}_{\mathbb{I}}(g).(X_g,Y_g) = \langle \mathbb{I}(T_gL_{g^{-1}} (X_g)),T_gL_{g^{-1}} (Y_g)\rangle$ for all $g \in G$ and $X_g$, $Y_g \in T_g G$. 

\subsection{Modelling of agents on SE(3)}
Let $SE(3)$ be the special Euclidean group on the space. Any $g \in SE(3)$ is represented as 
$g=\begin{pmatrix}R & q \\
 0 & 1
\end{pmatrix}$ where $R\in SO(3)$, the special orthogonal group in the space, and $q$ is the position in $\mathbb{R}^3$.
The group operation is the matrix multiplication. The Lie algebra of $SE(3)$ is denoted by $\mathfrak{se}(3)$ and any $\xi \in \mathfrak{se}(3)$ is represented as $\xi = \begin{pmatrix}
           \hat{\Omega} & v \\
           0 & 0
         \end{pmatrix}$, where

         \[ \Omega= \begin{pmatrix}
                                   \omega_1 \\ \omega_2 \\ \omega_3
                                 \end{pmatrix}, \,\hat{\Omega}= \begin{pmatrix}
                                   0 & -\omega_3 & \omega_2 \\
                                   \omega_3 & 0 & -\omega_1 \\
                                   -\omega_2 & \omega_1 & 0
                                 \end{pmatrix}, \, v = \begin{pmatrix}
                                                             v_1 \\ v_2 \\ v_3
                                                           \end{pmatrix}.
\]
With this notation, an element of $\mathfrak{se}(3)$ will be sometimes denoted as $\xi=(\hat{\Omega},v)$ for $\hat{\Omega}\in\mathfrak{so}(3)$ and $v\in\mathbb{R}^3$, where $\mathfrak{so}(3)$ denotes the set of $3\times 3$ skew-symmetric matrices. 

We define two projection maps $\pi_1: SE(3) \to SO(3)$ and $\pi_2 : SE(3) \to \mathbb{R}^3$ by $\pi_1(g)= \begin{pmatrix}
               I_{3 \times 3} & 0_{3\times 1}
             \end{pmatrix}g \begin{pmatrix}
               I_{3 \times 3} \\
               0_{1 \times 3}
             \end{pmatrix}$ and 
             $\pi_2(g)= \begin{pmatrix}
               I_{3 \times 3} & 0_{3\times 1}
             \end{pmatrix}g \begin{pmatrix}
                              0_{3\times 1} \\
                              1
                            \end{pmatrix}$.

It should be noted that these maps do not depend on the choice of coordinates in $SO(3)$. 

The \textit{adjoint map} $ad_{(\hat{\eta},v)} : \mathfrak{se}(3) \to \mathfrak{se}(3)$ is defined as $$ad_{(\hat{\eta},v)}(\hat{\xi},\bar{v})=(\hat{\eta}\times\hat{\xi}, \hat{\eta}\times\bar{v}-\hat{\xi}\times v),$$ for $ (\hat{\eta},v), (\hat{\xi},\bar{v}) \in \mathfrak{se}(3)$.
The \textit{dual of the adjoint map} $ad^*:\mathfrak{se}(3)\times\mathfrak{se}(3)^* \to\mathfrak{se}(3)^*$ is defined as 
$$ad^{*}_{(\xi,\alpha)}(\mu,\beta)=(\mu\times\xi-\alpha\times\beta,-\xi\times\beta),$$ for $(\xi,\alpha)\in\mathfrak{se}(3)$ and $(\mu,\beta)\in\mathfrak{se}(3)^{*}$.

Next, consider a multi-agent system with $s$ agents where each agent is modeled as a fully-actuated mechanical control system on $SE(3)$. This means we have $6$ actuators on each agent denoted by $u_i$, with $i=1,\ldots,s$. The dynamics of agent $i$ are given by the Euler-Poincar\'e equations
\begin{align}\label{dynliegrp}
\xi_i &= T_{g_i} L_{g^{-1}_i} \dot{g}_i,\\ \nonumber
\dot{\xi}_i&= u_i+\mathbb{I}^\sharp ad^*_{\xi_i} (\mathbb{I} \xi_i),
\end{align} for $i=1,\ldots,s$ and $\mathbb{I} :\mathfrak{se}(3) \to \mathfrak{se}(3)^*$ is an isomorphism from the Lie algebra $\mathfrak{se}(3)$ to its dual $\mathfrak{se}(3)^{*}$, with inverse map denoted by $\mathbb{I}^\sharp$ defined in subsection \ref{gm}.


\section{Nominal control law for collision avoidance of agents on $SE(3)$}\label{sec3}
In a nutshell, the control action developed in this paper is the negative gradient of a potential function. The potential function has two roles: (i) it takes very high values when there is a chance of a possible collision, and (ii) it has a unique minimum at the target configuration. Next, we describe the construction of this potential function and the control law to avoid collision among dynaical agents on $SE(3)$.

\subsection{Collision avoidance task}
Let the target configuration for agent $i$ be denoted by $g_{di}$. The proposed potential function for agent $i$ is
\begin{equation}\label{potfun}
  \psi_i = \frac{\gamma_{di} + f_{i} }{((\gamma_{di} + f_{i})^k + G_i)^{1/k}}
\end{equation}
where $\gamma_{di}(R_{i},q_{i}) = \hbox{trace}(R_i R^T_{di}-I_{3 \times 3}) + \| q_{i} - q_{di}\|^{2}$, with $I_{3 \times 3}$ being the identity of $SE(3)$. The first term is an approximation of the Riemannian distance between $R_{i}$ and $R_{di}$, and the second term is the Euclidean distance in $\R^{3}$.

The function $G_i$ expresses all possible collisions of agent $i$ with the others. The exponent $k$ is a scalar positive parameter. The function $\psi_i$ is the potential function with a minimum at $g_{di}$. The term $f_{i}$ is a correction term that allows to temporarily move the minimum away from $g_{di}$ whenever collisions with agent $i$ tend to occur. The inclusion of this term allows agent $i$ to temporarily depart from its minimum in order to avoid collisions.

In order to define $G_i$, we first state the possible collisions which can occur.  The term \textit{relation} is used to describe the possible collision schemes that can occur with respect to agent $i$. A \textit{binary relation} is a relation between agent $i$ and another. Each collision comprises one or many binary relations. The number of binary relations in a relation is called its \textit{level}. There are several possible relations at a certain level, these are referred to as first, second, etc. 


The \textit{relation proximity function} (RPF) between two agents is defined by 
\[ \beta_{ij} = ||\pi_2(g_i) - \pi_2(g_j) ||^2 - {(r_i + r_j)}^2
\]
which vanishes as agents $i$ and $j$ tend to collide. Here $g_i,g_j\in SE(3)$ and $r_i,r_j\in\mathbb{R}_{\geq 0}$. A RPF provides a measure of the distance between agent $i$ and the other agents involved in the relation. Each relation has its own RPF.

The RPF of the $k^{th}$ relation of level $l$ is $\displaystyle{{(b_{R_k})}_l = \sum_{j \in {R_k}_l} \beta_{ij}}$ where $R_k$ denotes the $k^{th}$ relation of level $l$. $h_{R_j}$ is the \textit{relation verification function} (RVF), which is defined as
\[ {(h_{R_k})}_l = {(b_{R_k})}_l + \frac{\lambda {(b_{R_k})}_l}{ {(b_{R_k})}_l + {(B_{R_k^C})}_l^{1/\sigma}}
\]
where $\lambda$, $\sigma$ are positive scalars and
\[ {(B_{R_k^C})}_l = \prod_{m \in R^C_k} {(b_m)}_l
\]
and ${({R_k^C})}_l$ is the complementary set of relations of level-$l$, that is, the set of all other relations in level $l$ other than the $k^{th}$ relation, or in other words, all the other relations with respect to agent $i$ that have the same number of binary relations with the relation $R_k$. Note that for the highest level $l=s-1$, only one relation is possible and so $(R_k^{C})_{s-1}=\varnothing$.

The key property of RVFs is that the RVF of one and only one relation can tend to zero at each time instant, namely, the RVF of the relation that holds at the highest level. A relation holds when
the proximity functions of all its binary relations tend to zero. Hence it serves as an analytic switch that is activated (tends to zero) only when the relation it represents is realized.



The function $G_i$ is defined as $\displaystyle{G_i = \prod_{l=1}^{n_L^i}\prod_{j=1}^{n_{R_l}^i} {(h_{R_j})}_l}$ where $n_L^i$ is the number of levels, and $n_{R_l}^i$ is the number of relations in level-$l$ with respect to agent $i$. Hence $G_i$ is the product of the RVF, of all relations with respect to $i$. 


\begin{remark}The definition of the function $G_i$ in the multiple moving agents situation is slightly different than the one introduced by the authors in \cite{RK}. The collision avoidance scheme in that approach involved a single moving point agent in an environment with static obstacles. A collision among rigid bodies and with more than one obstacle was therefore impossible, and the obstacle function was simply the product of the distances of the agent from each obstacle.\end{remark}

As we mentioned before, and following the same reasoning as in \cite{dimos_collisionDI05}, the function $f_i$ is defined by
\begin{equation*}
    f_{i}(G_{i}) = \begin{cases}
            & a_{0} + \sum_{j=1}^{3} a_{j}G_{i}^{j}, G_{i} \leqslant X \\
            & 0, G_{i}>X
        \end{cases}
\end{equation*}
where $X, a_{j}$ are parameters with $X>0$ and $a_{1} = 0$, $a_{2} = \frac{-3 a_{0}}{X^{2}}$, $a_{3} = \frac{2 a_{0}}{X^{3}}.$

Note that $a_{0}=f_{i}(0)$ is also a parameter. The parameter $X$ should be regarded as a sensitivity parameter that flags the possibility of collision occurrence. We will require that all desired positions satisfy: $G_{i}(g_{1}, \dots, g_{s})>X$, where $s$ is the number of agents.

\subsection{Nominal Control Law}\label{nominalsec}

Next, we present the decentralized control law for collision avoidance among dynamical agents on $SE(3)$.

\begin{theorem}\label{theorem}
  The following decentralized feedback control law guarantees asymptotic convergence to $g_{di}$ for every agent $i$ and avoids possible collisions between them
  \begin{align*} u_i &= - \mathbb{I}^{\sharp} \left(K_i T_{g_i} L_{g_i^{-1} } \left(\frac{\partial \psi_i}{\partial g_i}\right)\right) + \xi_i \theta_i\left(\xi_i, \frac{\partial \psi_i}{\partial t}\right) \\&+ F_d \xi_i - \mathbb{I}_i^\sharp ad^*_{\xi_i}( \mathbb{I}_i \xi_i),
  \end{align*}
  where, $\xi_i = g^{-1}_i(t) g_i(t)$, $K_i \in \mathbb{R}^+$,  $F_d$ is a dissipative $(1,1)$-tensor, i.e., satisfying ${\llangle F_d \xi_i , \xi_i \rrangle}_{\mathbb{I}_i}\leqslant 0$, $\theta_i(\xi_i, \frac{\partial \psi_i}{\partial t}) \in \mathbb{R}$ is given by $\displaystyle{\theta_i\left(\xi_i, \frac{\partial \psi_i}{\partial t}\right)=-\frac{c}{\tanh({\llangle \xi_i, \xi_i\rrangle}_{\mathbb{I}_i})} \Big| \frac{\partial \psi_i}{\partial t} \Big|}$, where $c > \hbox{max}_i\{K_i\}$, and, $\displaystyle{\frac{\partial \psi_i}{\partial t} \in \mathbb{R}}$ is defined as $\displaystyle{\frac{\partial \psi_i}{\partial t} = \sum_{j \neq i}\Big\langle \frac{\partial \psi_i}{\partial g_j}, \dot{g}_j \Big\rangle}$.\end{theorem}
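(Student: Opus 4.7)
The plan is a Lyapunov-based argument: for each agent I introduce the candidate
\[
V_i(g_i,\xi_i) \;=\; \tfrac{1}{2}\llangle \xi_i,\xi_i\rrangle_{\mathbb{I}_i} + K_i\,\psi_i,
\]
which is non-negative, has a strict minimum at the target configuration, and blows up whenever two agents approach a collision (since $G_i\to 0$ forces $\psi_i\to\infty$). Establishing $\dot V_i\le 0$ along trajectories will therefore give bounded velocity and guarantee collision avoidance simultaneously; LaSalle's invariance principle applied to the aggregate $V=\sum_i V_i$ then delivers the asymptotic convergence claim.

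Differentiating $V_i$ along the Euler--Poincar\'e dynamics \eqref{dynliegrp} yields $\dot V_i=\llangle\xi_i,\dot\xi_i\rrangle_{\mathbb{I}_i}+K_i\langle\partial\psi_i/\partial g_i,\dot g_i\rangle+K_i\,\partial\psi_i/\partial t$, where the explicit partial in $t$ absorbs the dependence of $\psi_i$ on the other agents through $G_i$. Substituting the proposed control, the term $-\mathbb{I}_i^\sharp ad^*_{\xi_i}(\mathbb{I}_i\xi_i)$ exactly cancels the drift in \eqref{dynliegrp}, and the gradient feedback contributes $-K_i\langle T_{g_i}L_{g_i^{-1}}(\partial\psi_i/\partial g_i),\xi_i\rangle=-K_i\langle\partial\psi_i/\partial g_i,\dot g_i\rangle$ by duality of left-translation together with the identity $\dot g_i=T_eL_{g_i}\xi_i$, cancelling the matching term from $K_i\dot\psi_i$. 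What remains is
\[
\dot V_i \;=\; \theta_i\,\llangle\xi_i,\xi_i\rrangle_{\mathbb{I}_i}+\llangle\xi_i,F_d\xi_i\rrangle_{\mathbb{I}_i}+K_i\,\frac{\partial\psi_i}{\partial t}.
\]
Using the elementary inequality $x/\tanh(x)\ge 1$ for $x>0$ together with the explicit form of $\theta_i$, the first term is bounded by $-c\,|\partial\psi_i/\partial t|$; combining with $\llangle\xi_i,F_d\xi_i\rrangle_{\mathbb{I}_i}\le 0$ and $c>\max_i K_i$ gives $\dot V_i\le (K_i-c)\,|\partial\psi_i/\partial t|\le 0$.

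With $V_i$ non-increasing, $\psi_i$ stays uniformly bounded so that collisions are ruled out. LaSalle then forces $\xi_i\to 0$ in the largest invariant subset of $\{\dot V=0\}$, and the residual invariance condition requires $\partial\psi_i/\partial g_i=0$, so trajectories approach critical points of $\psi_i$; a navigation-function argument in the spirit of Koditschek--Rimon, with $k$, $X$ and $a_0$ tuned so that each $\psi_i$ is a valid navigation function, guarantees that $g_{di}$ is the unique asymptotically stable critical point. The main obstacle I anticipate is the geometric bookkeeping in the cancellation of the gradient feedback: one must distinguish the intrinsic covector $\partial\psi_i/\partial g_i\in T^*_{g_i}SE(3)$ from its left-trivialization and verify the duality $\langle T_{g_i}L_{g_i^{-1}}(\partial\psi_i/\partial g_i),\xi_i\rangle=\langle\partial\psi_i/\partial g_i,\dot g_i\rangle$ at the correct fiber of $T^*SE(3)$. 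A secondary technical point is the critical-point analysis of $\psi_i$ on the non-Euclidean space $SE(3)^s$, which must be adapted from the standard Euclidean navigation-function theory in order to rule out spurious stable equilibria other than $g_{di}$.
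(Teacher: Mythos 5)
Your proposal follows essentially the same route as the paper: the same aggregate Lyapunov function $V=\sum_i\bigl(K_i\psi_i+\tfrac12\llangle\xi_i,\xi_i\rrangle_{\mathbb{I}_i}\bigr)$, the same cancellation of the Euler--Poincar\'e drift and of the gradient term via left-trivialization, and the same LaSalle/navigation-function endgame. Your handling of the $\theta_i$ term via $x/\tanh(x)\ge 1$, giving $K_i\,\partial\psi_i/\partial t+\theta_i\llangle\xi_i,\xi_i\rrangle_{\mathbb{I}_i}\le (K_i-c)\,\lvert\partial\psi_i/\partial t\rvert\le 0$ in one stroke, is in fact cleaner than the paper's two-case analysis and is correct. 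The geometric ``bookkeeping'' you flag as a risk is handled in the paper exactly as you anticipate, by $\bigl\langle T^{*}_{g_i}L_{g_i^{-1}}\,\partial\psi_i/\partial g_i,\,T_{g_i}L_{g_i^{-1}}\dot g_i\bigr\rangle=\bigl\langle\partial\psi_i/\partial g_i,\dot g_i\bigr\rangle$.

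There is, however, one genuine error in your argument for the collision-avoidance half of the claim. You assert that $G_i\to 0$ forces $\psi_i\to\infty$, so that boundedness of $V$ rules out collisions. This is false for the potential \eqref{potfun}: as $G_i\to 0$ one gets $\psi_i\to(\gamma_{di}+f_i)/\bigl((\gamma_{di}+f_i)^k\bigr)^{1/k}=1$, i.e.\ $\psi_i$ is a Koditschek--Rimon-type navigation function taking values in $[0,1]$ and \emph{saturating} at its maximum on the collision set rather than diverging there. Consequently ``$V$ non-increasing'' does not by itself exclude reaching $\{G_i=0\}$; the standard argument instead uses that the collision set is the preimage of the maximal value $1$, so a trajectory can only reach it if its initial energy is at least $K_i$, which imposes a (usually implicit) restriction on admissible initial conditions. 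To be fair, the paper's own proof is silent on this point as well --- it establishes $\dot V\le 0$ and convergence via LaSalle but never argues collision avoidance explicitly --- so your write-up is no weaker than the original on the convergence claim; but the mechanism you give for avoidance would fail as stated and needs to be replaced by the level-set argument above.
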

\begin{proof}
  Let $\displaystyle{V= \sum_{i=1}^{s} K_i \psi_i + \frac{1}{2} {\llangle \xi_i, \xi_i\rrangle}_{\mathbb{I}_i}}$. So the time derivative of the Lyapunov function along $u_i$ is given by 
  \begin{align*} 
    \dot{V} &= \sum_{i=1}^{s} K_i \left( \frac{\partial \psi_i}{\partial t} +  \Big\langle  \frac{\partial \psi_i}{\partial g_i},  {\dot{g}}_i \Big\rangle \right) + {\llangle \dot{\xi}_i, \xi_i \rrangle}_{\mathbb{I}_i}\\
    &=\sum_{i=1}^{s} K_i \left( \frac{\partial \psi_i}{\partial t} +  \Big\langle T^{*}_{g_i} L_{g_i^{-1}}  \frac{\partial \psi_i}{\partial g_i},   T_{g_i} L_{g_i^{-1}} {\dot{g}}_i \Big\rangle \right) + {\llangle \dot{\xi}_i, \xi_i \rrangle}_{\mathbb{I}_i}\\
    &= \sum_{i=1}^{s} K_i \left( \frac{\partial \psi_i}{\partial t} +  \Big\langle T_{g_i} L_{g_i^{-1}}  \frac{\partial \psi_i}{\partial g_i},   \xi_i \Big\rangle \right)\\
    &+{\Big\langle\Big\langle -K_i \mathbb{I}^{\sharp} \left(T_{g_i} L_{g_i^{-1}} \frac{\partial \psi_i}{\partial g_i} \right)+ \xi_i \theta_i\left(\xi_i, \frac{\partial \psi_i}{\partial t}\right) + F_d \xi_i, \xi_i \Big\rangle\Big\rangle}_{\mathbb{I}_i}\\
    &= \sum_{i=1}^{s} K_i  \frac{\partial \psi_i}{\partial t} + {\llangle \theta_i \xi_i , \xi_i \rrangle}_{\mathbb{I}_i} + {\llangle F_d \xi_i , \xi_i \rrangle}_{\mathbb{I}_i}.
  \end{align*}
  Observe that, if $\displaystyle{\frac{\partial \psi_i}{\partial t} > 0}$, as $c > \underset{i}{max} \{K_i\}$, then  $$ c> K_i \frac{\tanh {\llangle \xi_i, \xi_i\rrangle}_{\mathbb{I}_i}}{{\llangle \xi_i, \xi_i\rrangle}_{\mathbb{I}_i}} \qquad \forall i.$$ Therefore,  
$$ K_i \frac{\partial \psi_i}{\partial t} <  \frac{c {\llangle \xi_i, \xi_i\rrangle}_{\mathbb{I}_i}}{\tanh {\llangle \xi_i, \xi_i\rrangle}_{\mathbb{I}_i}} \big| \frac{\partial \psi_i}{\partial t} \big|,$$ so 
  $$K_i \frac{\partial \psi_i}{\partial t} + {\llangle \theta_i \xi_i, \xi \rrangle}_{\mathbb{I}_i} < 0.$$

  If $\displaystyle{\frac{\partial \psi_i}{\partial t}<0}$, $c >0$, then $$c > -K_i \frac{\tanh {\llangle \xi_i, \xi_i\rrangle}_{\mathbb{I}_i}}{{\llangle \xi_i, \xi_i\rrangle}_{\mathbb{I}_i}} \quad\forall i,$$ so
 $$-K_i \big| \frac{\partial \psi_i}{\partial t}\big|  -\frac{c {\llangle \xi_i, \xi_i\rrangle}_{\mathbb{I}_i}}{\tanh {\llangle \xi_i, \xi_i\rrangle}_{\mathbb{I}_i}}  \big| \frac{\partial \psi_i}{\partial t}\big| <0,$$ and therefore
$$ K_i \frac{\partial \psi_i}{\partial t} + {\llangle \theta_i \xi_i, \xi \rrangle}_{\mathbb{I}_i} < 0.$$
In addition, the function $\displaystyle{K_i \frac{\partial \psi_i}{\partial t} + {\llangle \theta_i \xi_i, \xi \rrangle}_{\mathbb{I}_i} = 0}$ when $\displaystyle{\frac{\partial \psi_i}{\partial t} =0}$. 

Hence, by LaSalle's Invariance Principle, the state
of the system converges to the largest invariant set
contained in 
\begin{align*}S &= \Big\{ (g, \xi)\in SE(3)\times\mathfrak{se}(3) \Big| \frac{\partial \psi_i}{\partial t} =0 \text{ and } \xi_{i}=0 \Big\}\\&= \{ (g, \xi)\in SE(3)\times\mathfrak{se}(3) |  \xi_{i}=0\},\end{align*}
where the last equality is implied by the definition of $\frac{\partial \psi_i}{\partial t}$. The largest invariant set contained in $S$ is characterized by $\displaystyle{\frac{\partial \psi_{i}}{\partial g_{i}} = 0}$ for each $i$. The desired positions $g_{di}$ satisfy this condition and are isolated critical points of $\psi_{i}$ \cite{koditschek1990robot}. The set of initial conditions that lead to saddle points are sets of measure zero \cite{milnor1963morse}. Hence the largest invariant set contained in the set $\displaystyle{\frac{\partial \psi_{i}}{\partial g_{i}} = 0}$ for each $i$ is $g_{di}$.
\end{proof}


\section{Learning-based control with Gaussian processes}\label{sec4}

Next, consider a multi-agent system with $s$ agents. Each agent is modeled as a fully-actuated mechanical system on $SE(3)$ and subject to unknown dynamics denoted by $\mathbf{f}_{uk}^{i}:SE(3)\times\mathfrak{se}(3)\to\mathfrak{se}(3)^{*}$. The dynamics of agent $i$ are given by the Euler-Poincaré equations
\begin{align}\label{for:system1}
\xi_i &= T_{g_i} L_{g^{-1}_i} \dot{g}_i,\\ \nonumber
\dot{\xi}_i-\mathbb{I}_i^\sharp ad^*_{\xi_i} (\mathbb{I}_i \xi_i)&= u_i+\mathbf{f}_{uk}^{i}(\mathbf{p}_i),
\end{align} for $i=1,\ldots,s$, where $\mathbf{p}_i=(g_i,\xi_i)=((R_i,q_i),(\Omega_i,v_i))\in SE(3)\times\mathfrak{se}(3)\simeq SE(3)\times\mathbb{R}^{6}$. As introduced in the vehicle's dynamics~\eqref{for:system1}, we assume that parts of the dynamics are unknown, i.e., $\mathbf{f}_{uk}^{i}$. The proposed control strategy is based on the design of a decentralized controller by using a model that is updated by the predictions of Gaussian Processes (GP). 

The data for the GP is collected in arbitrary time intervals of the vehicle's dynamics during the control process. Then, the predictions of the GP are updated based on the collected dataset and the vehicle model is improved. An overview of the proposed decentralized control strategy for each agent is depicted in Figure \ref{fig:control_graph}.
\begin{figure}
    \centering
    \includegraphics[width=0.9\textwidth]{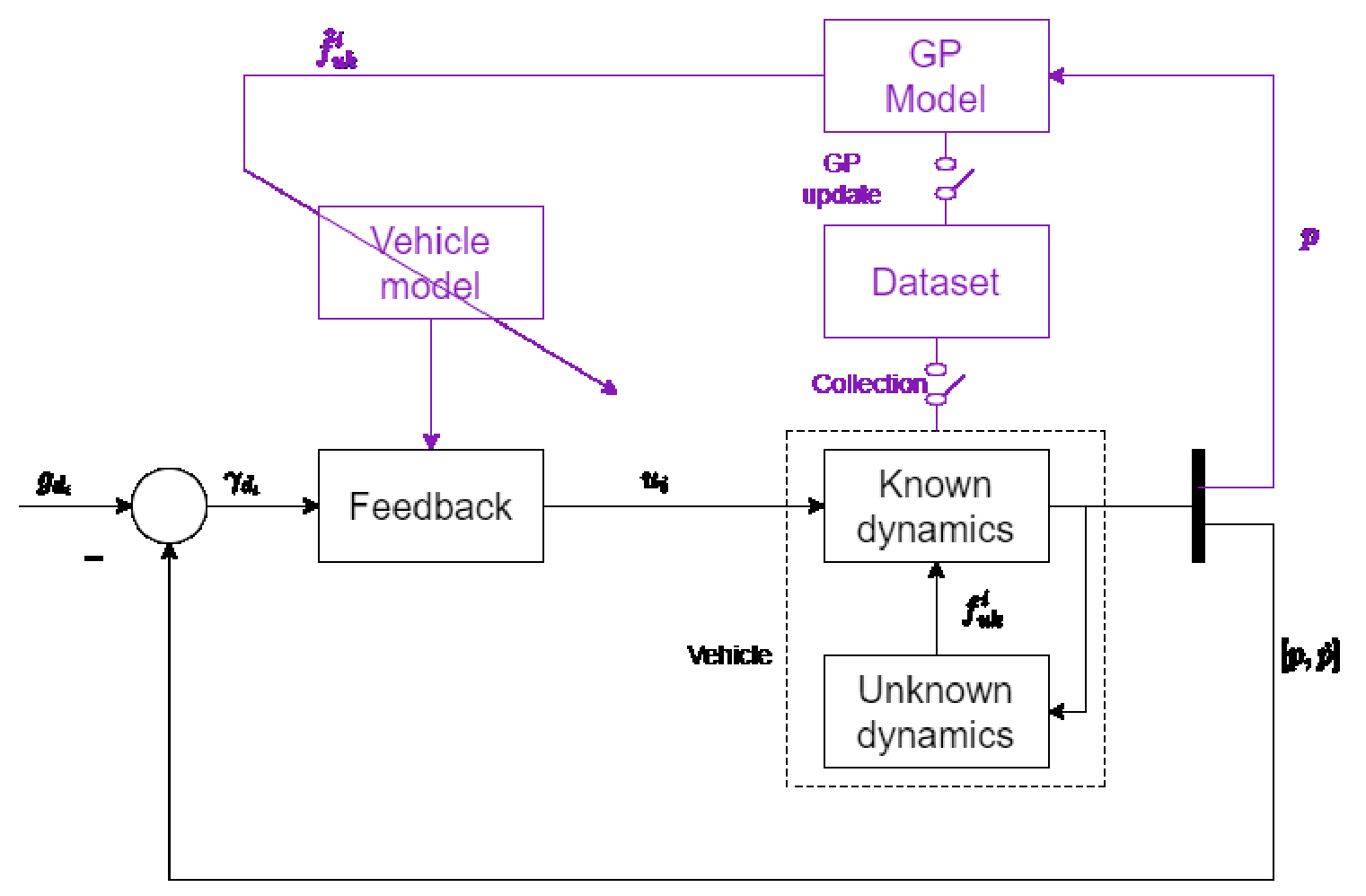}
    \caption{Block diagram of the proposed decentralized control law for each agent.}
    \label{fig:control_graph}
\end{figure} Next, we present the learning and control framework in detail.

\subsection{Learning with Gaussian processes}
For the compensation of the unknown dynamics of \eqref{for:system1}, we use  Gaussian Processes (GPs) to estimate the values of $\mathbf{f}_{uk}^{i}$ for a given state $\mathbf{p}_i$. For this purpose, $N(n):\mathbb{N}\to\mathbb{N}$ training points of the system \eqref{for:system1} are collected to create a dataset
\begin{align}
\D_{i,n(t)}=\{\mathbf{p}_i^{\{j\}},\mathbf{y}_i^{\{j\}}\}_{j=1}^{N(n)},\,\ i=1,\ldots s.\label{for:dataset}
\end{align}
 The output data $\mathbf{y}\in\mathfrak{se}(3)^{*}\simeq\R^6$ is given by $\mathbf{y}=[\dot{\xi}_i-\mathbb{I}_i^\sharp ad^*_{\xi_i} (\mathbb{I}_i \xi_i)-u_i]$.  The dataset $\D_{i,n(t)}$ with $n: \R_{\geq 0}\to\N$ can change over time $t$, such that at time $t_1\in\R_{\geq 0}$ the dataset $\D_{i,n(t_1)}$ with $N(n(t_1))$ training points exists. This allows to accumulate training data over time, i.e., the number of training points $N(n)$ in the dataset $\D_{i,n}$ is monotonically increasing, but also ``forgetting" of training data to keep $N(n)$ constant. The time-dependent estimates of the GP are denoted by $\hat{\mathbf{f}}^{i,n}_{uk}(\mathbf{p}_i)$ to highlight the dependence on the corresponding dataset $\D_{i,n}$. Note that this construction also allows offline learning, i.e., the estimation depends only on the previously collected data, or any hybrid online/offline approach.


\textbf{Assumption 1:} The number of datasets $\D_{i,n}$ is finite and there are only finitely many switches of~$n(t)$ over time, such that there exists a time $T\in\R_{\geq 0}$ where $n(t)=n_{\text{end}},\forall t\geq T$.

Note that Assumption $1$ is little restrictive since the number of sets is often naturally bounded due to finite computational power or memory limitations and, since the unknown functions $\mathbf{f}_{uk}^{i}$ in \eqref{for:system1} are not explicitly time-dependent, long-life learning is typically not required. Therefore, there exists a constant dataset $\D_{i,n_{end}}$ for all $t>T_{end}$. Furthermore, Assumption $1$ ensures that the switching between the datasets is not infinitely fast which is natural in real-world applications.

Gaussian process models have been proven to be a very powerful oracle for nonlinear function regression. For the prediction, we concatenate the~$N(n)$ training points of $\D_{i,n}$ for each agent in an input matrix~$X_i=[\mathbf{p}_i^1,\mathbf{p}_i^2,\ldots,\mathbf{p}_i^{N(n)}]$ and a matrix of outputs~$Y_i^\top=[\mathbf{y}_i^1,\mathbf{y}_i^2,\ldots,\mathbf{y}_i^{N(n)}]$, where $\mathbf{y}$ might be corrupted by an additive Gaussian noise with $\mathcal{N}(0,\sigma I_6)$. A prediction for the output $\mathbf{y}^*_i\in\mathbb{R}^6$ at a new test point $\mathbf{p}_i^*\in SE(3)\times\mathfrak{se}(3)\simeq SE(3)\times\mathbb{R}^{6}$ is given by
\begin{align}
	\mean_i^{j}(\mathbf{y}_i^*\vert \mathbf{p}_i^*,\D_{i,n})&=m_i^{j}(\mathbf{p}_i^*)+\mathbf{k}_i(\mathbf{p}_i^*,X_i)^\top K_i^{-1}\label{for:gppred}\\
	&\phantom{=}\left(Y^{:,j}_i-[m_i^{j}(X^{:,1}_i),\ldots,m_i^{j}(X^{:,N}_i)]^\top\right)\notag\\
	\var_i^j(\mathbf{y}_i^*\vert \mathbf{p}_i^*,\D_{i,n})&=k_i(\mathbf{p}_i^*,\mathbf{p}_i^*)-\mathbf{k}_i(\mathbf{p}_i^*,X_i)^\top K^{-1}_i \mathbf{k}_i(\mathbf{p}_i^*,X_i)\notag
\end{align}
for all $j\in\{1,\ldots,6\}$, where $Y^{:,j}_i$ denotes the $j$-th column of the matrix of outputs~$Y_i$. The kernel $k_i\colon (SE(3)\times\mathbb{R}^{6})\times (SE(3)\times\mathbb{R}^{6})\to\R$ is a measure for the correlation of two states~$(\mathbf{p}_i,\mathbf{p}_i^\prime)$, whereas the mean function $m_i\colon SE(3)\times\mathbb{R}^{6}\to\R$ allows to include prior knowledge. The function~$K_i\colon (SE(3)\times\mathbb{R}^{6})^N\times  (SE(3)\times\mathbb{R}^{6})^N\to\R^{N\times N}$ is called the Gram matrix, whose elements are given by $K_i^{j',j}= k_i(X_i^{:, j'},X_i^{:, j})+\delta(j,j')\sigma^2$ for all $j',j\in\{1,\ldots,N\}$ with the delta function $\delta(j,j')=1$ for $j=j'$ and zero, otherwise. The vector-valued function~$\mathbf{k}_i\colon (SE(3)\times\mathbb{R}^{6})\times  (SE(3)\times\mathbb{R}^{6})^N\to\R^N$, with the elements~$k^j_i = k_i(\mathbf{p}_i^*,X_i^{:, j})$ for all $j\in\{1,\ldots,N\}$ and $i=1,\ldots,s$, expresses the covariance between $\mathbf{p}_i^*$ and the input training data $X_i$.

The selection of the kernel and the determination of the corresponding hyperparameters can be seen as degrees of freedom of the regression. The hyperparameters and the variance $\sigma$ of the Gaussian noise in the training data can be estimated by optimizing the marginal log-likelihood  (see~\cite{rasmussen2006gaussian} for instance). A powerful kernel for GP models of physical systems, and that we use in our simulations, is the squared exponential kernel, with an isotropic distance measure. An overview of the properties of different kernels can be found in~\cite{rasmussen2006gaussian}. In addition, the mean function can be achieved by common system identification techniques of the unknown dynamics~$\mathbf{f}_{uk}^{i}$ as described in~\cite{aastrom1971system}. However, without prior knowledge, the mean function is set to zero, i.e., $m_i(\mathbf{p}_i)=0$.

Based on \eqref{for:gppred}, the normal distributed components $(y_i^j)^*\vert \mathbf{p}_i^*,\D_{i,n}$ are combined into a multi-variable distribution for each agent 
$\mathbf{y}_i^*\vert(\mathbf{p}_i^*,\D_{i,n}) \sim \mathcal{N} (\Mean_i(\cdot),\Var_i(\cdot))$, where $\Mean_i(\mathbf{y}_i^*\vert \mathbf{p}_i^*,\D_{i,n})=[\mean_i^1(\cdot),\ldots,\mean_i^{6}(\cdot)]^\top$ and  $\Var_i(\mathbf{y}_i^*\vert \mathbf{p}_i^*,\D_{i,n})=\diag\left[\var_i^1(\cdot),\ldots,\var_i^{6}(\cdot)\right]$, with $i=1,\ldots,s$. 

\begin{remark}For simplicity, we consider identical kernels for each output dimension and each agent. However, the GP model can be easily adapted to different kernels for each output dimension and each agent. Moreover, as we use the GP in an online setting where new data is collected over time, the dataset $\D_n$ for the prediction \eqref{for:gppred} changes over time. The GP model allows to integrate new training data in a simple way by exploiting that every subset follows a multivariate Gaussian distribution.\end{remark}

For the later stability analysis of the closed-loop system, we introduce the following assumptions. In addition, we implicitly assume i.i.d data.

\textbf{Assumption 2:}	Consider a Gaussian process with the predictions $\hat{\mathbf{f}}_{uk}^{i,n}\in\mathcal{C}^0$ based on the dataset $\D_{i,n}$~\eqref{for:dataset}. Let $Q_\X\subset (\hbox{SE}(3)\times (\X\subset\R^6))$ be a compact set where $\hat{\mathbf{f}}_{uk}^{i,n}$ are bounded on $\X$. There exists a bounded function $\bar{\Delta}_{i,n}\colon Q_\X\to\R_{\geq 0}$ such that, the prediction error is bounded by
	\begin{align}
		\Prob\Bigg\lbrace\left\Vert\mathbf{f}_{uk}^{i}(\mathbf{p}_i)- \hat{\mathbf{f}}_{uk}^{i,n}(\mathbf{p}_i)\right\Vert\leq \mathbf{\bar{\rho}}_{i,n}(\mathbf{p}_i)\Bigg\rbrace\geq \delta_i
	\end{align}
	with probability $\delta_i\in (0,1]$ and $\mathbf{p}_i\in Q_\X$.

\begin{remark} Assumption $2$ ensures that on each dataset $\mathcal{D}_{i,n}$, there exists a probabilistic upper bound for the error between the prediction $\hat{\mathbf{f}}_{uk}^{i,n}(\mathbf{p}_i)$ and the actual $\mathbf{f}_{uk}^{i}(\mathbf{p}_i)$ on $Q_\X$.\end{remark}

To provide model error bounds, some assumptions on the unknown parts of the dynamics \eqref{for:system1} must be introduced \cite{wolpert1996lack}. 

\textbf{Assumption 3:}
	The kernel~$k_i$ is selected such that $\mathbf{f}_{uk}^{i}$ have a bounded reproducing kernel Hilbert space (RKHS) norm on $Q_\X$, i.e.,~$\Verts{\mathbf{f}_{uk}^{i,j}}_{k_i}<\infty$ for all~$j=1,\ldots 6$, and $i=1,\ldots,s$.

The norm of a function in a RKHS is a smoothness measure relative to a kernel~$k$ that is uniquely connected with this RKHS. In particular, it is a Lipschitz constant with respect to the metric of the used kernel. A more detailed discussion about RKHS norms is given in~\cite{wahba1990spline}.~Assumption $3$ requires that the kernel must be selected in such a way that the functions $\mathbf{f}_{uk}^{i}$ are elements of the associated RKHS. This sounds paradoxical since this function is unknown. However, there exist some kernels, namely universal kernels, which can approximate any continuous function arbitrarily precisely on a compact set~\cite[Lemma 4.55]{steinwart2008support}, such that the bounded RKHS norm is a mild assumption. Finally, with~Assumption $3$, the model error can be bounded as follows

\begin{lemma}[adapted from~\cite{srinivas2012information}]
	\label{lemma:boundederror}
Consider the unknown functions $\mathbf{f}_{uk}^{i}$ and a GP model satisfying~Assumption $3$. The model error for each agent is bounded  by
	\begin{align}
		\Prob\Bigg\lbrace\Bigg\Vert & \Mean_i\Big(\hat{\mathbf{f}}_{uk}^{i}(\mathbf{p}_i)\Big\vert \mathbf{p}_i,\D_{i,n}\Big)-\mathbf{f}_{uk}^{i}\Bigg\Vert\notag\\
		&\hspace{1.5cm}\leq \Bigg\Vert\bm{\beta}_{i,n}^\top \Var_i^{\frac{1}{2}}\Big(\hat{\mathbf{f}}_{uk}^{i}(\mathbf{p}_i)\Big\vert \mathbf{p}_i,\D_{i,n}\Big)\Bigg\Vert\Bigg\rbrace\geq \delta_i\notag
	\end{align}
	for each $i=1,\ldots,s$, and $\mathbf{p}_i\in Q_\mathcal{X},\delta_i\in(0,1)$ with $\bm{\beta}_{i,n}\in\mathbb{R}^6$, \begin{align}
    (\bm{\beta}_{i,n})_{j} =\sqrt{2\left\|\bar{\Delta}_{i,n}^{j}\right\|_{k}^{2}+300 \gamma_{i,j} \ln ^{3}\left(\frac{N(n)+1}{1-\delta^{1/6}}\right)}. 
\end{align}
The variable~$\gamma_{i,j} \in \R$ is the maximum information gain
\begin{align}
    \gamma_{i,j} &=\max _{\mathbf{p}_i^{\{1\}},\ldots, \mathbf{p}_i^{\{N(n)+1\}} \in Q_\X} \frac{1}{2} \log \left|I+\sigma_{i,j}^{-2} K_i\left(\mathbf{x}, \mathbf{x}^{\prime}\right)\right| \\
    \mathbf{x}, \mathbf{x}^{\prime} & \in\left\{\mathbf{p}_i^{\{1\}}, \ldots, \mathbf{p}_i^{\{N(n)+1\}}\right\}.
\end{align}
\end{lemma}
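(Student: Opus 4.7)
The plan is to reduce the vector-valued concentration bound to the scalar GP posterior concentration result of Srinivas et al.\ applied coordinatewise, and then aggregate using the diagonal structure of the posterior covariance in~\eqref{for:gppred}.

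First I would invoke the scalar bound on each of the six output coordinates $j\in\{1,\dots,6\}$ of agent $i$. Assumption~3 places each component $\mathbf{f}_{uk}^{i,j}$ in the RKHS of $k_i$ with finite norm, and Assumption~2 furnishes the dataset-dependent quantity $\bar{\Delta}_{i,n}^{j}$ that controls the initial model error on the compact set $Q_\X$. Under these hypotheses the scalar posterior concentration inequality of Srinivas et al.\ yields, uniformly on $Q_\X$,
\[ \Prob\Bigl\{|\mean_i^{j}(\mathbf{p}_i)-\mathbf{f}_{uk}^{i,j}(\mathbf{p}_i)| \leq (\bm{\beta}_{i,n})_{j}\,\sqrt{\var_i^{j}(\mathbf{p}_i\vert \mathbf{p}_i,\D_{i,n})}\,\Bigr\} \geq \delta_i^{1/6}, \]
with $(\bm{\beta}_{i,n})_{j}$ in exactly the form stated. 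The exponent $1/6$ on $\delta_i$ is engineered so that after combining the six coordinates the joint confidence level remains $\delta_i$; the appearance of $\gamma_{i,j}$ and the $\ln^{3}$ term is a direct consequence of the information-gain bound used in the original proof.

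Next I would combine the six scalar bounds. Because the Gram matrix $K_i$ in~\eqref{for:gppred} is shared across outputs and the additive noise acts independently on each coordinate with common variance $\sigma^{2}$, the multi-output posterior covariance is the diagonal matrix $\Var_i=\diag(\var_i^{1},\dots,\var_i^{6})$. Using $\|v\|^{2}=\sum_{j} v_{j}^{2}$ together with the coordinatewise inequalities from the first step gives
\[ \bigl\|\Mean_i(\,\cdot\,)-\mathbf{f}_{uk}^{i}(\mathbf{p}_i)\bigr\|^{2} \;\leq\; \sum_{j=1}^{6}(\bm{\beta}_{i,n})_{j}^{2}\,\var_i^{j}(\,\cdot\,) \;=\; \bigl\|\bm{\beta}_{i,n}^{\top}\Var_i^{1/2}(\,\cdot\,)\bigr\|^{2}, \]
and the claim follows by taking square roots. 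The joint probability is obtained either from independence of the six componentwise GPs, giving $(\delta_i^{1/6})^{6}=\delta_i$, or equivalently by a union bound on the six failure events; this is precisely the rationale for introducing the $1/6$ exponent inside the logarithm of the formula for $\bm{\beta}_{i,n}$.

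The main obstacle is the first step, namely transferring the scalar concentration result of Srinivas et al.\ into the present setting while keeping constants explicit. This requires verifying that $\mathbf{f}_{uk}^{i,j}$ belongs to the RKHS associated with $k_i$ on $Q_\X$, that the maximum information gain $\gamma_{i,j}$ is finite (which depends on the kernel choice and the compactness of $Q_\X$), and that the additive Gaussian noise model matches the hypotheses behind the martingale concentration argument that produces the $300\,\gamma\,\ln^{3}(\,\cdot\,)$ term. Once that scalar bound is in hand, the remaining steps reduce to algebraic and probabilistic bookkeeping that directly exploit the diagonal posterior structure implied by~\eqref{for:gppred}.
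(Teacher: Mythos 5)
Your proposal is correct and takes essentially the same route as the paper: the paper's entire proof is the one-line statement that the lemma is a direct implication of Theorem~6 of Srinivas et al., and your coordinatewise application of that scalar bound at confidence level $\delta_i^{1/6}$ per output dimension, followed by aggregation through the diagonal posterior covariance $\Var_i=\diag(\var_i^1,\ldots,\var_i^6)$, is precisely the bookkeeping that citation leaves implicit.
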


\begin{proof}
It is a direct implication of~\cite[Theorem 6]{srinivas2012information}.
\end{proof}

Note that the prediction error bound in~Assumption $2$ is given by 
$\bar{\Delta}_{i,n}(\mathbf{p}_i):= ||\bm{\beta}_{i,n}^\top\Var_i^{\frac{1}{2}}(\hat{\mathbf{f}}_{uk}^{i}(\mathbf{p}_i)\vert \mathbf{p}_i,\D_{i,n})||$ as shown by Lemma ~\ref{lemma:boundederror}.
\begin{remark}
An efficient algorithm can be used to find $\mathbf{\beta}_{i,n}$ based on the maximum information gain. Even though the values of the elements of~$\mathbf{\beta}_i$ are typically increasing with the number of training data, it is possible to learn the true functions $\mathbf{f}_{uk}^{i}$ arbitrarily exactly due to the shrinking variance $\Sigma_i$, 
see~\cite{Berkenkamp2016ROA}. In general, the prediction error bound $\bar{\Delta}_{i,n}(\mathbf{p}_i)$ is large if the uncertainty for the GP prediction is high and vice versa. Additionally, the bound is typically increasing if the set $Q_\X$ is expanded. The stochastic nature of the bound is because just a finite number of noisy training points are available.
\end{remark}

\subsection{Control design}
Consider the potential function $V$ as described in Section \ref{nominalsec}. In the absence of the unknown disturbances $V$ allows to write the closed loop system as \begin{align}\label{dynliegrpclosedloop}
\xi_i &= T_{g_i} L_{g^{-1}_i} \dot{g}_i,\\ \nonumber
\dot{\xi}_i&= - \mathbb{I}^{\sharp} \left(K_i T_{g_i} L_{g_i^{-1} } \left(\frac{\partial \psi_i}{\partial g_i}\right)\right) + \xi_i \theta_i\left(\xi_i, \frac{\partial \psi_i}{\partial t}\right) + F_d \xi_i.
\end{align} for $i=1,\ldots,s$. Next, we design a decentralized data-driven control law by using GPs to learn and mitigate the unknown disturbances of the system \eqref{for:system1}, and asymptotically stabilize the motion of the agents to a desired trajectory. 

\begin{theorem}\label{th2}
Consider the system \eqref{for:system1}, and a GP model trained  with the data set \eqref{for:dataset} satisfying Assumptions 1-3. Then the control law 
     \begin{align*} u_i &= - \mathbb{I}^{\sharp} \left(K_i T_{g_i} L_{g_i^{-1} } \left(\frac{\partial \psi_i}{\partial g_i}\right)\right) + \xi_i \theta_i\left(\xi_i, \frac{\partial \psi_i}{\partial t}\right) \\&+ F_d \xi_i - \mathbb{I}_i^\sharp ad^*_{\xi_i}( \mathbb{I}_i \xi_i)- \Mean_i\Big(\hat{\mathbf{f}}_{uk}^{i}(\mathbf{p}_i)\Big\vert \mathbf{p}_i,\D_{i,n}\Big)
  \end{align*}  where, $\xi_i = g^{-1}_i(t)g_i(t)$, $K_i \in \mathbb{R}^{+}$,  $F_d$  satisfies ${\llangle F_d \xi_i , \xi_i \rrangle}_{\mathbb{I}_i}\leq 0$, and $\theta_i(\xi_i, \frac{\partial \psi_i}{\partial t}) \in \mathbb{R}$ is given by 
  \[  \theta_i\left(\xi_i, \frac{\partial \psi_i}{\partial t}\right)=-\frac{c}{\tanh({\llangle \xi_i, \xi_i\rrangle}_{\mathbb{I}_i})} \Big| \frac{\partial \psi_i}{\partial t} \Big|,
  \]
    with $c > \hbox{max}_i\{K_i\}$, guarantees that the solution trajectories converge asymptotically to the desired equilibria set  $S$ and are ultimately uniformly bounded in probability on $Q_{\X}$ by \begin{align}
    \Prob\{||\gamma_{di}(t)||\leq\max_{\mathbf{p}_i\in Q_{\X}}\bar{\Delta}_{i,n(T)}(\mathbf{p}_i),\forall t\geq T_\epsilon\}\geq \epsilon,
\end{align} with $T_\epsilon\in\R_{\geq 0}$, where $\gamma_{d,i}$ is the tracking error, and $\bar{\Delta}_{i,n}(\mathbf{p}):Q_{\X}\to\R_{\geq 0}$ defines an upper bound of the model error for each $i=1,\ldots,s$.
\end{theorem}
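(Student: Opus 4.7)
The plan is to mirror the Lyapunov construction of Theorem~\ref{theorem} and then adapt it into an ultimate-boundedness argument that absorbs the residual GP estimation error into a probabilistic tail bound. Concretely, I would take the same candidate
$$V \;=\; \sum_{i=1}^{s} K_i \psi_i + \tfrac{1}{2}\,{\llangle \xi_i,\xi_i\rrangle}_{\mathbb{I}_i},$$
and differentiate along the closed loop obtained by substituting the proposed $u_i$ into \eqref{for:system1}. Since the control law is exactly the nominal controller of Theorem~\ref{theorem} with the added term $-\Mean_i(\hat{\mathbf{f}}_{uk}^{i}(\mathbf{p}_i)\vert \mathbf{p}_i,\D_{i,n})$, all the cancellations carried out in Theorem~\ref{theorem} still occur, so after the same calculation I would be left with
$$\dot V \;=\; \sum_{i=1}^{s}\!\Big[\,K_i\tfrac{\partial\psi_i}{\partial t}+{\llangle\theta_i\xi_i,\xi_i\rrangle}_{\mathbb{I}_i}+{\llangle F_d\xi_i,\xi_i\rrangle}_{\mathbb{I}_i}+{\llangle \mathbf{f}_{uk}^{i}(\mathbf{p}_i)-\Mean_i(\hat{\mathbf{f}}_{uk}^{i}(\mathbf{p}_i)\vert\mathbf{p}_i,\D_{i,n}),\xi_i\rrangle}_{\mathbb{I}_i}\Big].$$
The first three terms inside the sum are non-positive by the very inequalities proved in Theorem~\ref{theorem}; only the last term, the residual between the true unknown dynamics and the GP posterior mean, is new.

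Next, I would invoke Lemma~\ref{lemma:boundederror} together with Assumption~2 to control this residual. With probability at least $\delta_i$, the norm of $\mathbf{f}_{uk}^{i}-\Mean_i(\hat{\mathbf{f}}_{uk}^{i}\vert\cdot,\D_{i,n})$ is bounded by $\bar{\Delta}_{i,n}(\mathbf{p}_i)$, and by Cauchy--Schwarz the last cross term is in absolute value at most $\bar{\Delta}_{i,n}(\mathbf{p}_i)\Verts{\xi_i}_{\mathbb{I}_i}$. Comparing this positive contribution against the strictly negative terms $\llangle\theta_i\xi_i,\xi_i\rrangle_{\mathbb{I}_i}+\llangle F_d\xi_i,\xi_i\rrangle_{\mathbb{I}_i}$, which grow quadratically in $\Verts{\xi_i}_{\mathbb{I}_i}$, shows that $\dot V<0$ whenever the trajectory lies outside a sublevel set of $V$ whose size is governed by $\max_{\mathbf{p}_i\in Q_{\X}}\bar{\Delta}_{i,n}(\mathbf{p}_i)$. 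Aggregating the single-agent statements by independence (or by a union bound), the joint inequality holds with probability at least $\epsilon=\prod_{i=1}^s \delta_i$, which identifies the confidence level appearing in the theorem.

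To conclude the ultimate-boundedness claim, I would appeal to Assumption~1: after the finite switching time $T$, the dataset is frozen at $\D_{i,n_{\mathrm{end}}}$, so $\bar{\Delta}_{i,n(t)}=\bar{\Delta}_{i,n(T)}$ for all $t\geq T$. A standard Lyapunov ultimate-boundedness argument (along the lines of Khalil, Theorem~4.18) then yields a finite $T_\epsilon\geq T$ after which the tracking error $\gamma_{di}$ is contained in the ball of radius $\max_{\mathbf{p}_i\in Q_{\X}}\bar{\Delta}_{i,n(T)}(\mathbf{p}_i)$, with probability at least $\epsilon$. Asymptotic convergence to the invariant set $S=\{(g,\xi):\xi_i=0\}$ follows from the LaSalle-type argument of Theorem~\ref{theorem} applied on the region where $\dot V\leq 0$.

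The main obstacle is the probabilistic/time-varying bookkeeping: the bound from Lemma~\ref{lemma:boundederror} is pointwise-in-$\mathbf{p}_i$ and holds only with probability $\delta_i$, whereas the claim of the theorem requires a uniform-in-time, joint-over-agents guarantee. Handling this cleanly requires (i) using Assumption~1 to reduce to a finite collection of datasets so that a union bound over switches does not degrade $\epsilon$, (ii) uniformizing the pointwise bound on the compact set $Q_{\X}$ by taking the supremum $\max_{\mathbf{p}_i\in Q_{\X}}\bar{\Delta}_{i,n}(\mathbf{p}_i)$, and (iii) ensuring trajectories stay in $Q_{\X}$ at least up to time $T_\epsilon$, which follows because $V$ is radially unbounded with respect to $\xi_i$ and the bound on $\dot V$ yields forward-invariance of a sufficiently large sublevel set.
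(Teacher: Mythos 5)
Your proposal follows essentially the same route as the paper's proof: the same Lyapunov function $V=\sum_i K_i\psi_i+\tfrac12\llangle\xi_i,\xi_i\rrangle_{\mathbb{I}_i}$, the same decomposition of $\dot V$ into the non-positive nominal terms from Theorem~\ref{theorem} plus the GP residual, the same invocation of Lemma~\ref{lemma:boundederror} to bound that residual with probability $\epsilon$, and the same use of Assumption~1 to freeze the dataset and obtain the uniform ultimate bound after $T_\epsilon$. If anything, your bookkeeping (pairing the residual with $\xi_i$ via Cauchy--Schwarz, the explicit union bound over agents, and the forward-invariance remark for $Q_\X$) is more careful than the paper's own write-up, but it is the same argument.
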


\begin{proof}
    The proof of Theorem \ref{th2} follows similar computations than Theorem \ref{theorem}. By considering the Lyapunov function $\displaystyle{\sum_{i=1}^{s}K_i\psi_i+\frac{1}{2}\llangle\xi_i,\xi_i\rrangle_{\mathbb{I}_i}}$, the time derivative of $V$ along $u_i$ satisfies 
\begin{align*}\dot{V}&=\sum_{i=1}^{s}K_i\frac{\partial\psi_i}{\partial t}+\llangle\theta_i\xi_i,\xi_i\rrangle_{\mathbb{I}_i}+\llangle F_d\xi_i,\xi_i\rrangle_{\mathbb{I}_i}\\&+\left(\mathbf{f}_{uk}^{i}(\mathbf{p}_i)-\Mean_i\Big(\hat{\mathbf{f}}_{uk}^{i}(\mathbf{p}_i)\Big\vert \mathbf{p}_i,\D_{i,n}\Big)\right).\end{align*}

Since $\displaystyle{\sum_{i=1}^{s}K_i\frac{\partial\psi_i}{\partial t}+\llangle\theta_i\xi_i,\xi_i\rrangle_{\mathbb{I}_i}+\llangle F_d\xi_i,\xi_i\rrangle_{\mathbb{I}_i}\leq 0}$ by the proof of Theorem \ref{theorem}, and by employing Lemma
\ref{lemma:boundederror}, $$P\{\dot{V}\leq \bar{\Delta}_{i,n}(\mathbf{p}_i)\}\geq\epsilon,$$ where $\bar{\Delta}_{i,n}(t)(\mathbf{p}_i):Q_{\X}\to\R_{\geq 0}$ is a bounded function such that $\Vert\bm{\beta}_{i,n}^\top\Var_i^{\frac{1}{2}}(\hat{\mathbf{f}}_{uk}^{i}(\mathbf{p}_i)\vert \mathbf{p}_i,\D_{i,n}\Vert\leq\bar{\Delta}_{i,n}(\mathbf{p}_i)$, which exists because the kernel function is continuous and therefore, it is bounded on a compact set $Q_{\X}\subset(\hbox{SE}(3)\times (\X\subset\R^6))$, and then the variance $\Sigma(\hat{\mathbf{f}}_{uk}^{i}\mid \textbf{p}_i,\D_{i,n})$ is bounded (see \cite{beckers2016equilibrium}). Then, the value of $\dot{V}$ is negative with probability $\epsilon$ for all $\gamma_{d,i}$ with $\displaystyle{||\gamma_{d,i}||>\max_{\mathbf{p}_i\in Q_{\X}}\bar{\Delta}_{i,n}(\mathbf{p}_i)}$, where the maximum exists since $\bar{\Delta}_{i,n}(\mathbf{p}_i)$ is bounded in $Q_{\X}$. By using Assumption $1$, we define $T_{\epsilon}\in\R_{\geq 0}$ such that $\D_{n(T)}=\D_{n(t)}$ for all $t\geq T_{\epsilon}$. Then, $V$ is uniformly ultimately bounded in probability by $\displaystyle{\Prob\{||\gamma_{d,i}||\leq b,
\,\forall t\geq T_\epsilon\in\R_{\geq 0}\}\geq\epsilon}$, with probabilistic bound $\displaystyle{b=\max_{\mathbf{p}_i\in\Omega}\bar{\Delta}_{i,n(T)}(\mathbf{p}_i)}$.\end{proof}

\begin{remark}Note that the individual control law~$u_i(t)$ of each agent depends on the distance to its neighbors and the data set based on its own dynamics only. So, the proposed controller is decentralized.\end{remark}

\section{Simulation results}\label{sec5}

To validate the proposed control strategy, this section presents a simulation scenario involving seven UAVs flying in formation, each equipped with a camera for an aerial filming task. The control approach incorporates a potential function that serves two key purposes. First, it ensures collision avoidance by guiding the UAVs along safe trajectories. Additionally, the potential function includes a term that prevents the UAVs from adopting unfavorable orientations (see equation \eqref{potfun}). In this case, the parameter is set to $k = 1$, and the minimum inter-vehicle distance at which it is activated is 1.5 m. Specifically, each UAV adjusts its camera orientation to ensure that no other UAV obstructs its field of view. Such coordinated aerial filming tasks have been employed in multi-vehicle missions \cite{Moreno2021}.

The simulations were implemented in MATLAB Simulink. All UAVs have the same characteristics: a mass of approximately $1.3$ kg, a motor-to-motor distance (diagonal) of $450$ mm, and approximate moments of inertia $I_x \approx 0.02$ kg·m$^2$, $I_y \approx 0.02$ kg·m$^2$, and $I_z \approx 0.04$ kg·m$^2$. Additionally, sensor noise was introduced in the simulations, considering values consistent with those of a Pixhawk-based system. Specifically, Gaussian noise with a standard deviation of \(0.5^\circ\) was applied to the attitude measurements, while position estimates were affected by noise with a standard deviation of 1.0 meters. Furthermore, external disturbances were simulated, corresponding to wind gusts of up to 5 m/s.

\begin{figure}[h]
    \centering
    \includegraphics[width=1.0\textwidth]{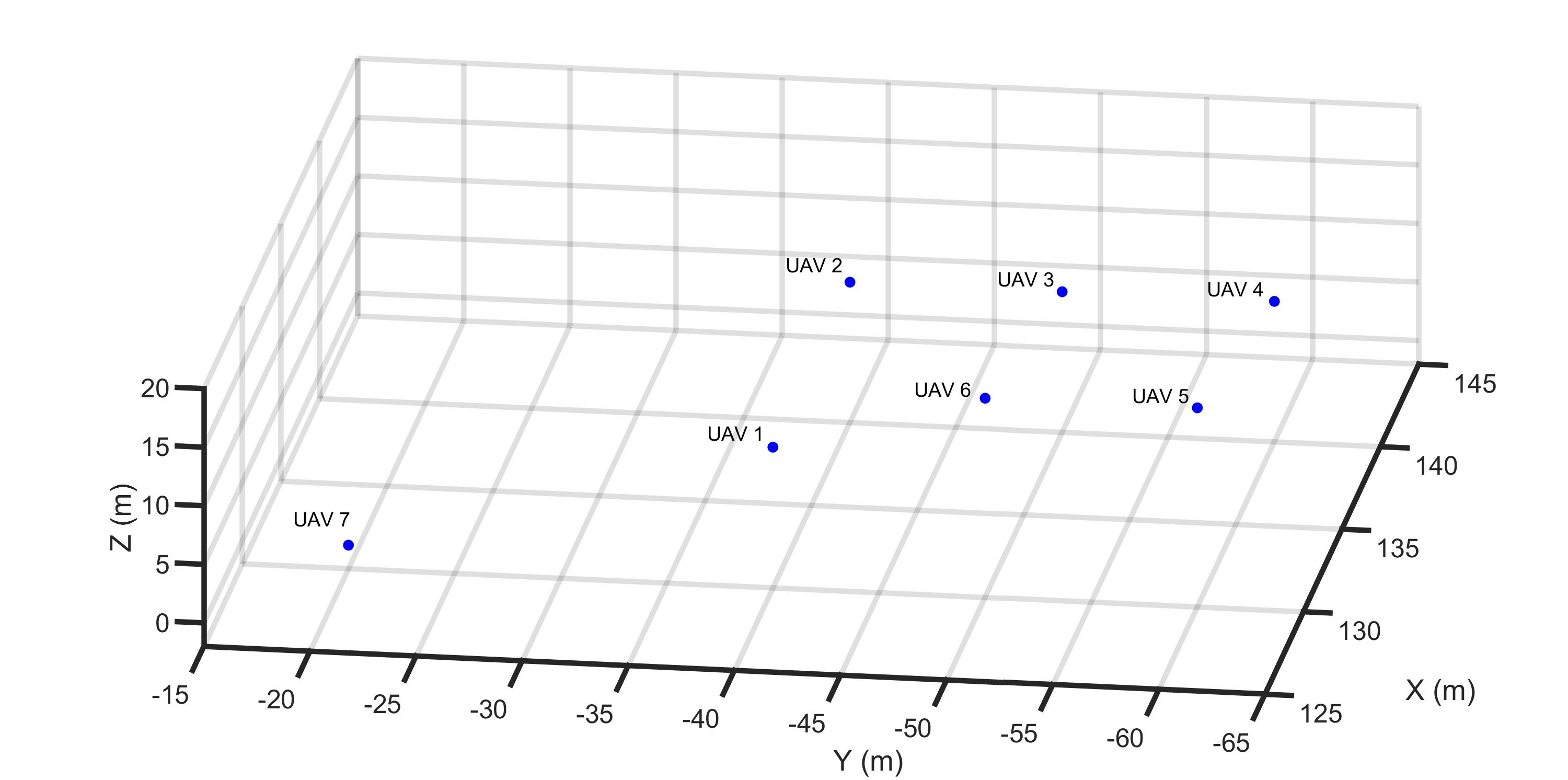}
    \caption{Spatial distribution of UAVs.}
    \label{fig:uavs}
\end{figure}

Figure \ref{fig:uavs} shows the spatial distribution of the UAVs. Initially, the vehicles are positioned in a square formation at different heights. The positions of the seven vehicles are: \( q_1=(130, -40, 10) \), \( q_2=(140, -40, 10) \), \( q_3=(140, -50, 10) \), \( q_4=(140, -60, 10) \), \( q_5=(130, -60, 15) \), \( q_6=(130, -50, 15) \), and the last vehicle is landed at \( q_7=(130, -20, 0) \).  

All seven vehicles have their cameras oriented along the x-axis. In this configuration, the UAV located at \( q_2 \) appears within the field of view (FOV) of the UAV at \( q_1 \). Consequently, the control algorithm adjusts the attitude of the UAV at \( q_1 \) to prevent filming the other UAV.  

Every eight seconds, the vehicles change their positions: each UAV at position \( q_i \) moves to position \( q_{i+1} \) for \( i = 1, \dots, 6 \), while the UAV at \( q_7 \) takes off and moves to \( q_1 \). Figure \ref{fig:uavs3D} shows the trajectory of the seven UAVs when moving from one point $q_i$ to the next one. 
During this process, the UAV at \( q_6 \) attempts to land at \( q_7 \), requiring the control algorithm to ensure collision avoidance.

\begin{figure}[h]
    \centering
    \includegraphics[width=1.0\textwidth]{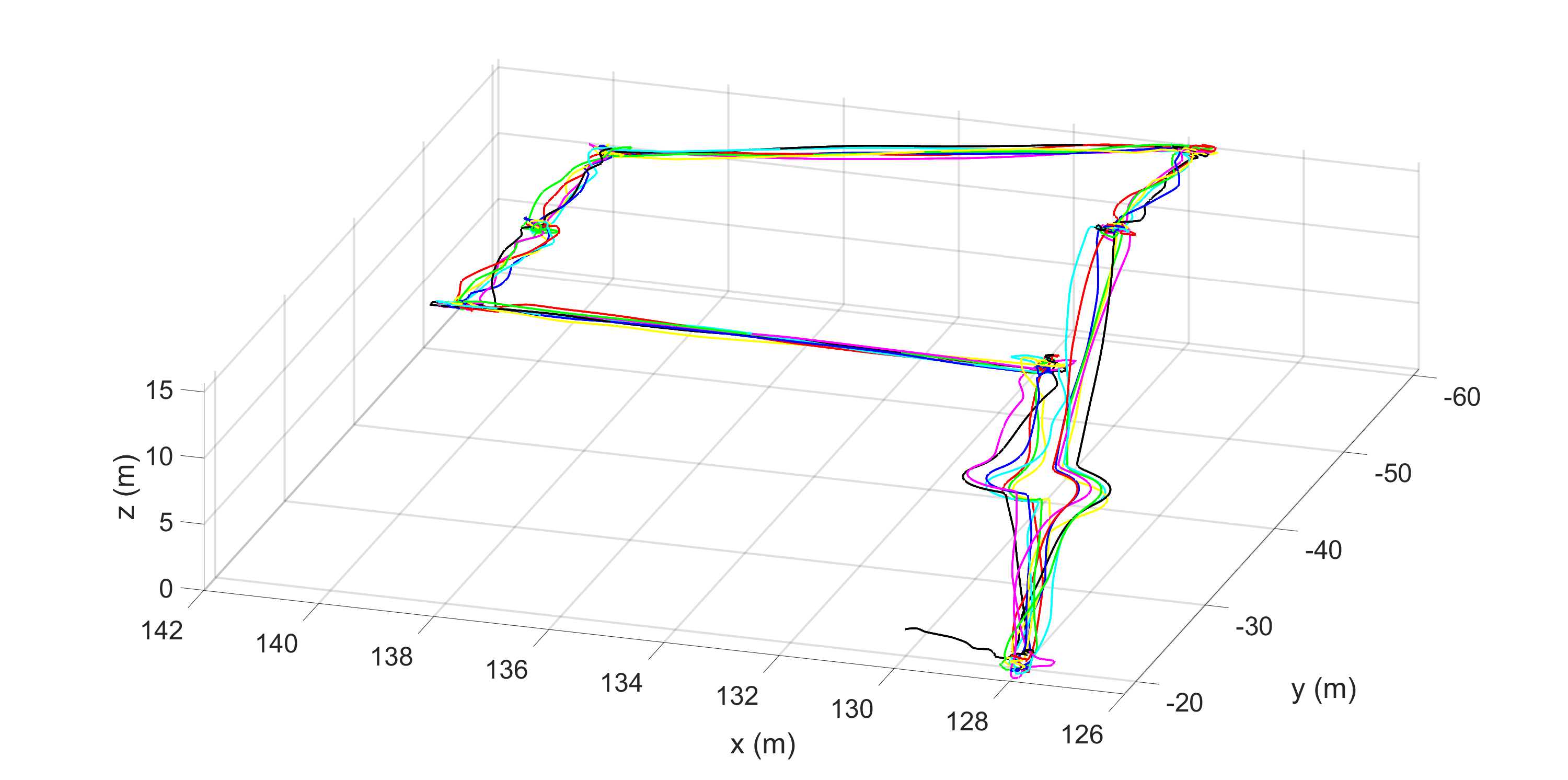}
    \caption{Trajectories of the seven UAVs in the formation. It can be observed how the vehicles evade each other during landing and takeoff.}
    \label{fig:uavs3D}
\end{figure}
 
\begin{figure}[h]
    \centering
    \includegraphics[width=1.0\textwidth]{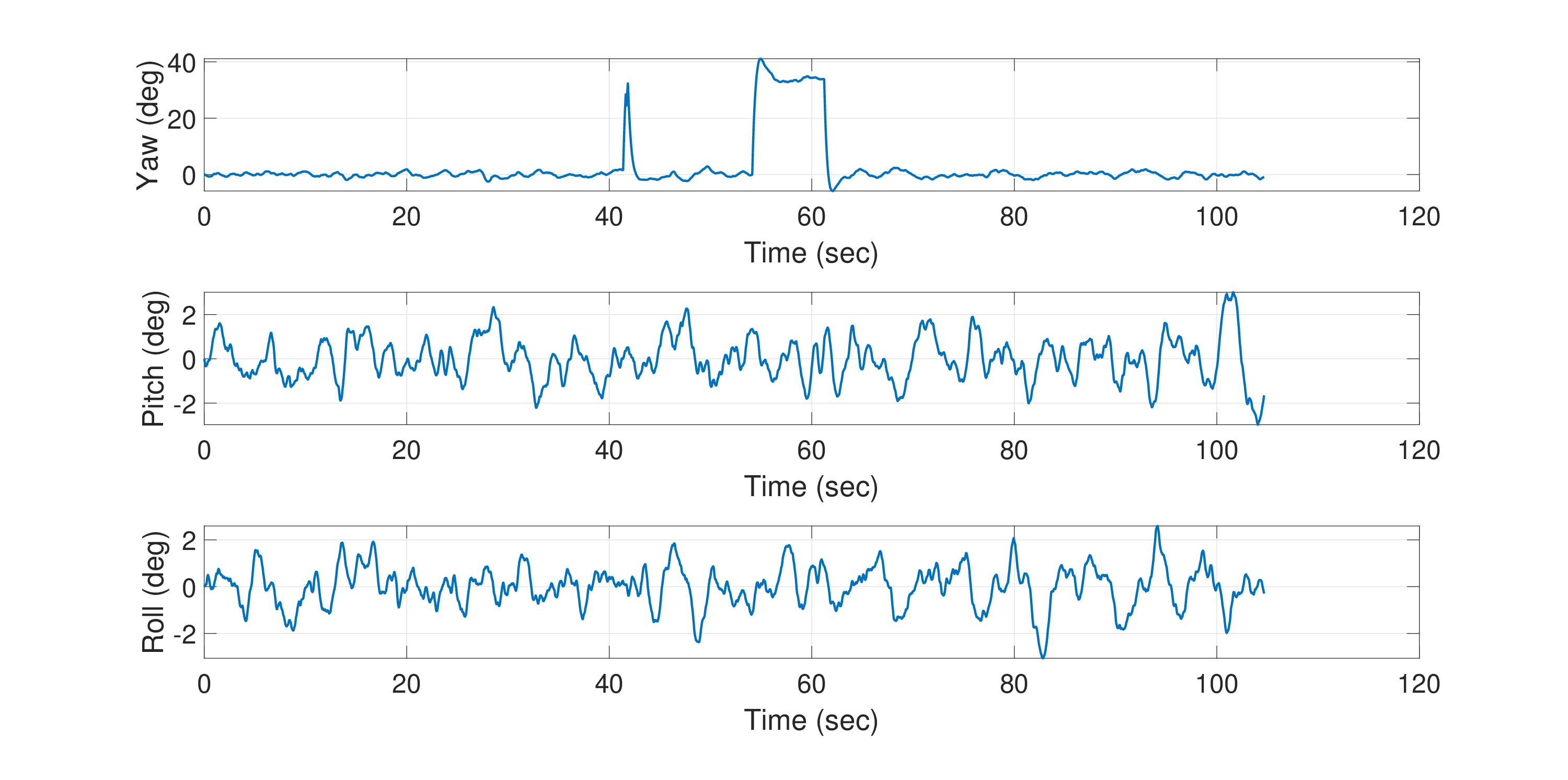}
    \caption{UAV attitude, avoiding filming other vehicles. The vehicle starts from position $q_2$ and goes through all points during the first $70$ seconds.}
    \label{fig:attcollision}
\end{figure}

\begin{figure}[h]
    \centering
    \includegraphics[width=1.0\textwidth]{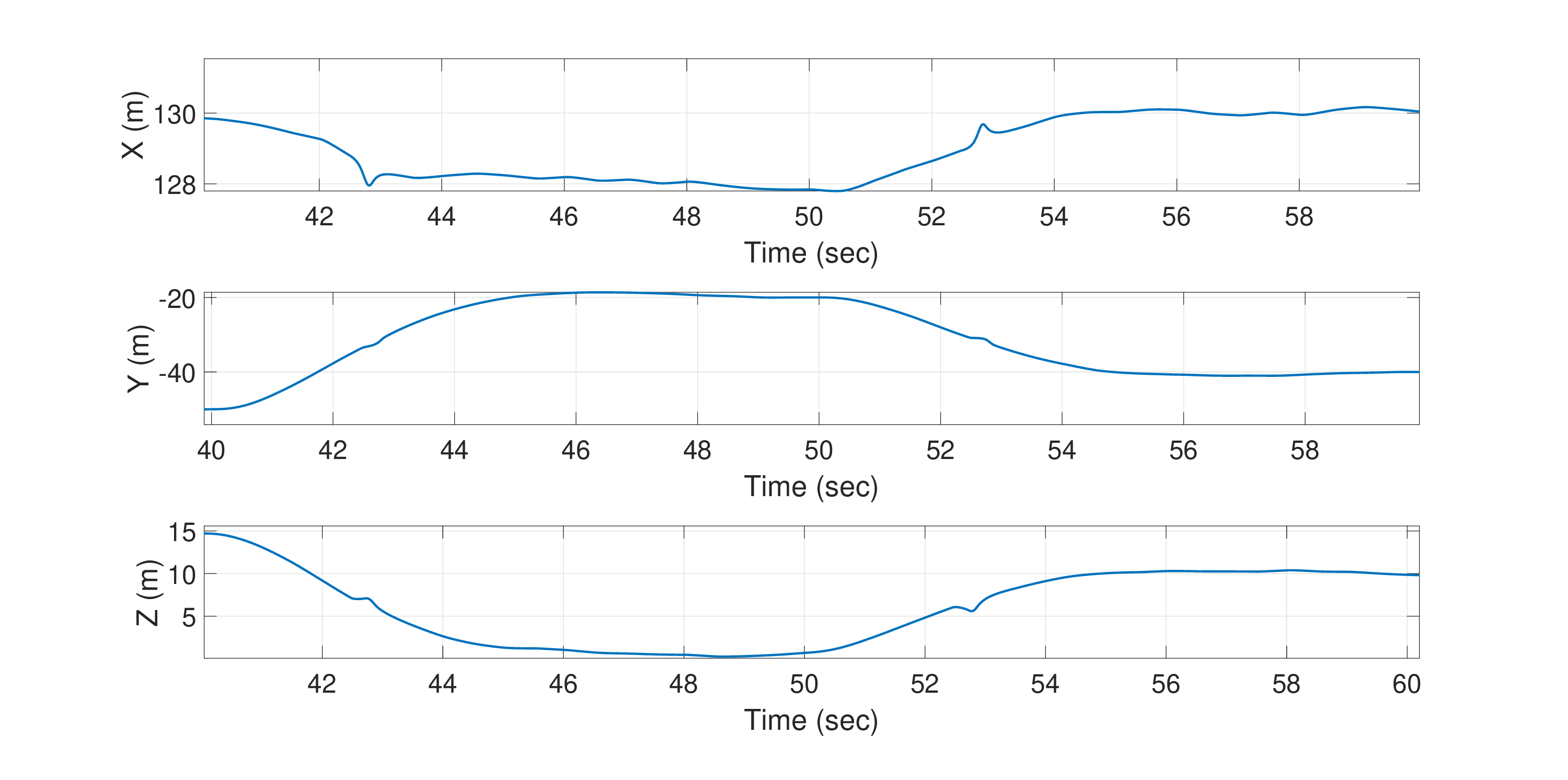}
    \caption{UAV position avoiding collision with other vehicles. The vehicle is at $q_{6}$ after $40$ seconds, lands at $q_{7}$ at $49$ seconds and passes by $q_{1}$ after $58$ seconds.}
    \label{fig:poscollision}
\end{figure}

In Figure \ref{fig:attcollision}, it can be observed that the attitude of the vehicle changes for approximately eight seconds when approaching position \( q_1 \) at 54 seconds, attempting to avoid filming the vehicle at position \( q_2 \). Additionally, at 41 seconds, the UAV briefly changes its orientation to avoid filming a passing vehicle.  

On the other hand, in Figure \ref{fig:poscollision}, at 42.5 seconds, the vehicle avoids colliding with another UAV when moving from position \( q_7 \) to \( q_1 \). Furthermore, at 52.5 seconds, when the UAV takes off from position \( q_7 \) to \( q_1 \), its trajectory intersects with that of another UAV, which is landing at position \( q_1 \). The control algorithm successfully prevents a collision. 

To further validate the effectiveness of incorporating Gaussian process (GP) estimation for disturbance compensation, the same mission is simulated under two conditions: one that includes GP-based estimation and one that does not. In these simulations, external disturbances are introduced to the UAVs, modeling the effects of factors such as wind gusts, ground effect variations, and sensor noise. By comparing the results, it is possible to evaluate how the GP-based approach improves trajectory tracking and orientation control, ensuring smoother and more stable flight behavior. The analysis highlights the advantages of using data-driven estimation to mitigate unmodeled disturbances, ultimately enhancing the robustness of the control strategy.

In Figures \ref{fig:attGP} and \ref{fig:posGP}, the response of a UAV in terms of orientation and position, respectively, can be observed. Around second 50, a torque and force disturbance is introduced into the vehicle, impacting tracking performance and even its ability to avoid collisions. This effect becomes evident when comparing the trajectory with that in Figure~\ref{fig:attcollision}. It can be seen that, without disturbances in the system, at second 54, the UAV adjusts its trajectory to avoid filming another vehicle, as previously mentioned. However, in Figure \ref{fig:attGP}, when the vehicle attempts the same maneuver, the disturbances prevent it from executing it correctly. Both in position and orientation, oscillations can be observed, affecting the UAV's navigation performance. These disturbances are estimated using a Gaussian Process (GP). Around second 80, compensation based on this estimation is introduced for both torque and force, and it can be observed that the oscillations disappear. In fact, at approximately second 125, the vehicle attempts the same maneuver it tried at second 54, but this time it successfully executes it, thanks to the GP-based compensation.

As explained previously, the GP model is used to compensate for the unknown dynamics. This GP model works with a dataset of the form \eqref{for:dataset} constructed using data related to the states $\bm{p}_{i}$, consisting of $n=250$ points, and also requires a kernel. We have decided to use the well-known Squared Exponential kernel because it provides good performance. In Figures \ref{fig:attGPest} and \ref{fig:posGPes}, the torque and force distributions can be observed. The figures show the perturbation in a red line, the estimation in a dashed blue line, and the confidence interval in gray.


\begin{figure}[h]
    \centering
    \includegraphics[width=1.0\textwidth]{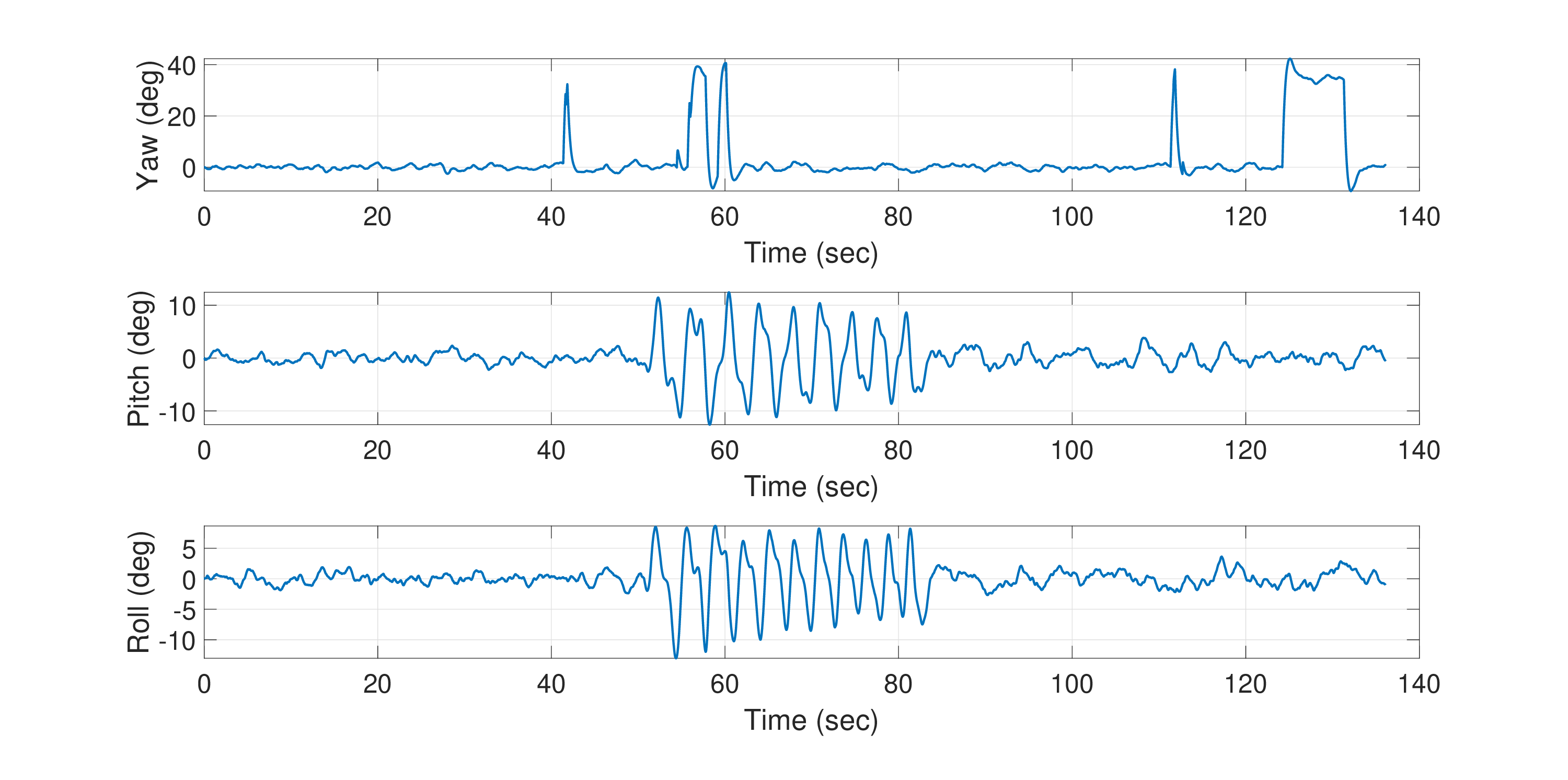}
    \caption{UAV attitude response with and without GP compensation. The vehicle starts again from position $q_2$ and goes through all points during the first $70$ seconds. Around second $80$, the GP estimation is used.}
    \label{fig:attGP}
\end{figure}

\begin{figure}[h]
    \centering
    \includegraphics[width=1.0\textwidth]{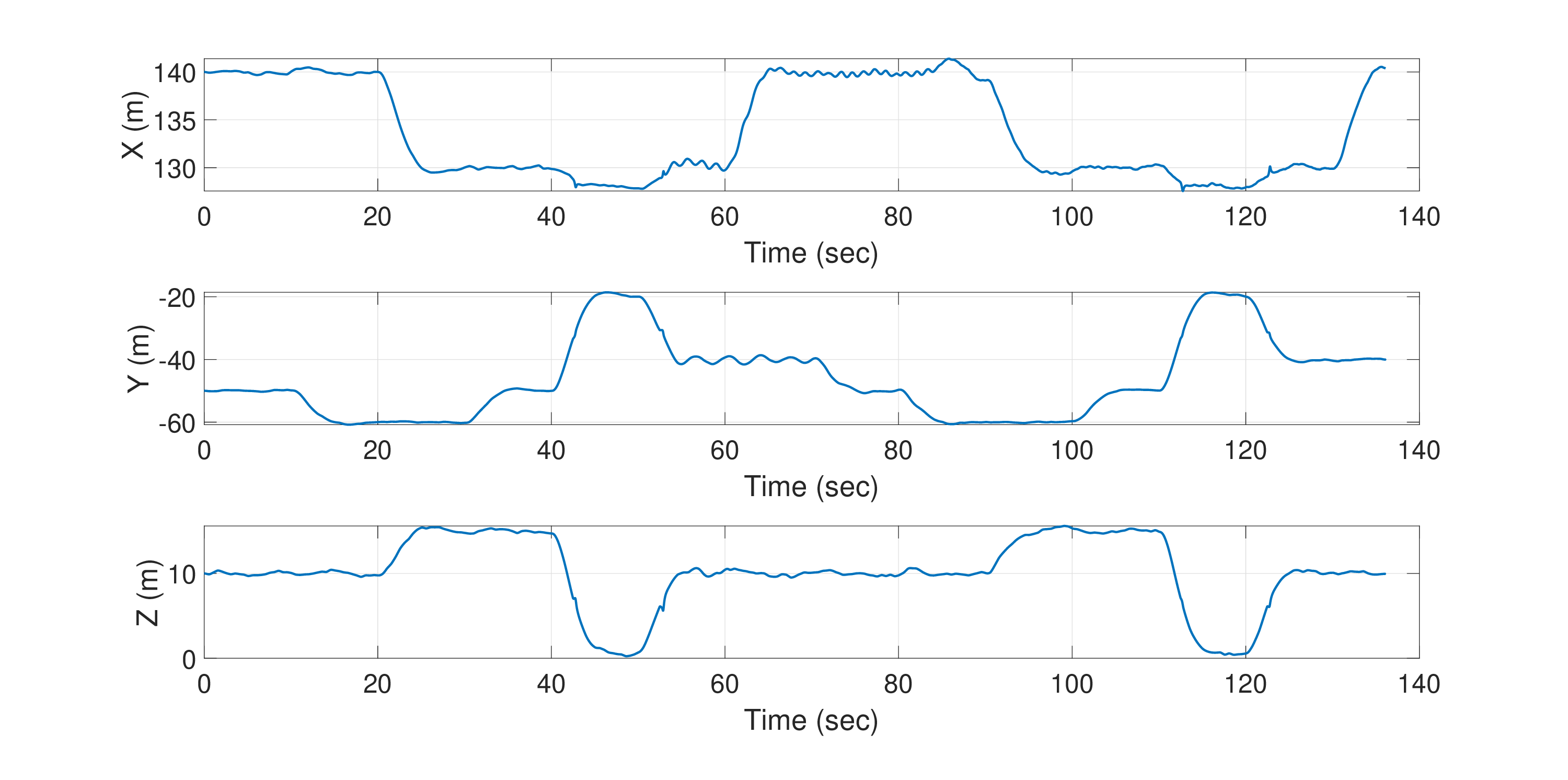}
    \caption{UAV position response with and without GP compensation. We are observing position during the same trajectory as in Figure \ref{fig:attGP}.}
    \label{fig:posGP}
\end{figure}

\begin{figure}[h]
    \centering
    \includegraphics[width=1.0\textwidth]{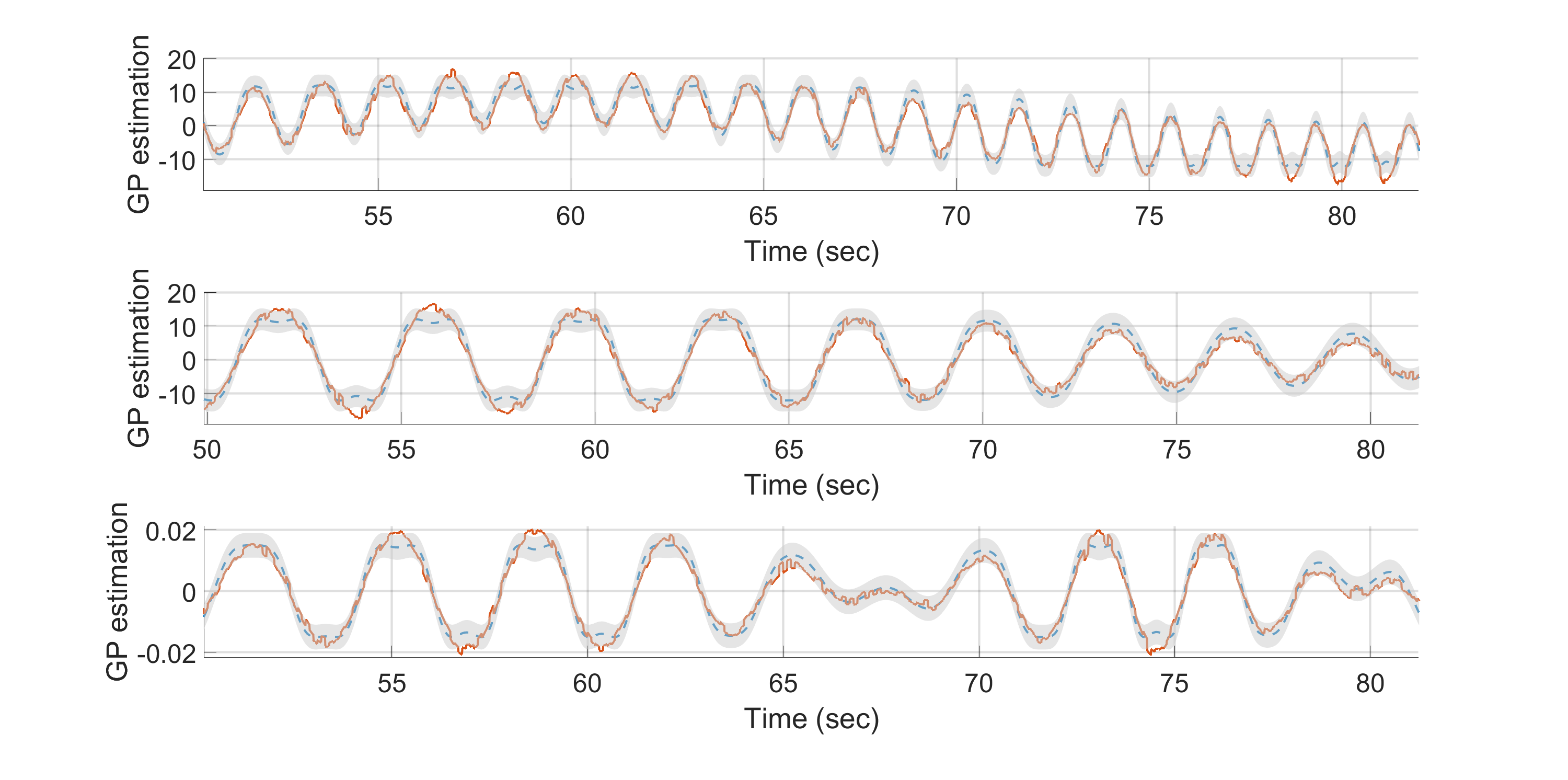}
    \caption{GP-based estimation of torque disturbances.}
    \label{fig:attGPest}
\end{figure}

\begin{figure}[h]
    \centering
    \includegraphics[width=1.0\textwidth]{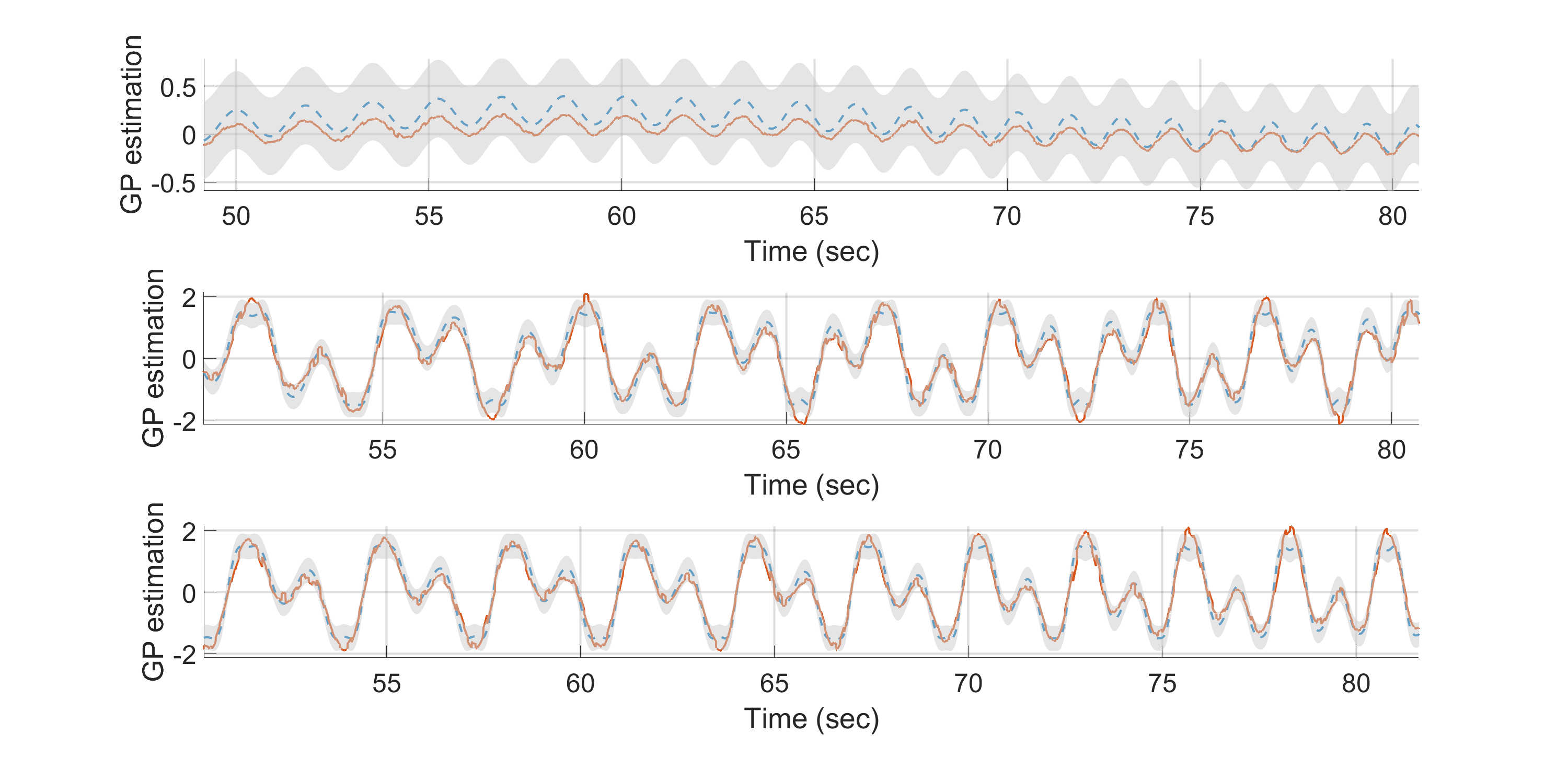}
    \caption{GP-based estimation of force disturbances.}
    \label{fig:posGPes}
\end{figure}

\section{Conclusions and Future Work}

This paper introduced a learning-based decentralized control strategy for multi-agent systems evolving on the Lie group $SE(3)$, incorporating Gaussian processes (GPs) to handle unknown dynamics while ensuring collision avoidance and trajectory tracking. By leveraging GPs, the proposed control approach provides probabilistic guarantees on tracking performance, improving the robustness and adaptability of multi-agent systems in uncertain environments.

The key contributions of this work include: (i) The integration of learning-based control with decentralized navigation functions to enhance adaptability in multi-agent coordination.
(ii) A mathematically rigorous framework that ensures probabilistic bounded tracking errors under partially unknown dynamics.
(iii) Simulation results demonstrating the effectiveness of the proposed controller in achieving precise trajectory tracking while avoiding collisions. The results confirm that the proposed approach outperforms traditional decentralized control methods by dynamically adapting to unknown disturbances, making it highly relevant for applications in robotic swarms and collaborative robotic tasks.

Several directions remain open for future research, including (i) extending the model to scenarios where agents are underactuated and operate in time-varying, uncertain environments, including obstacles and dynamically changing conditions, as well as implementing real-world experiments to validate the method beyond simulation. (ii)  Examining the impact of adversarial noise and sensor failures on learning-based control performance, and
developing control laws with stability guarantees under more general uncertainty models.

\end{document}